% Submitted to AHP 
% December 20, 2013
% by Andrea Sacchetti
% Department of Physics, Computer Sciences and Mathematics
% University of Modena e Reggio Emilia
% Via Campi 213/b 
% Modena - 41125
% Italy

\documentclass[reqno]{amsart}

\usepackage{graphicx}
\usepackage{esint}

\newtheorem{lemma}{Lemma}
\newtheorem{theorem}{Theorem}

\newtheorem{remark}{Remark}
\newtheorem{definition}{Definition}
\newcommand{\be}{\begin{eqnarray}}
\newcommand{\ee}{\end{eqnarray}}
\newcommand{\bee}{\begin{eqnarray*}}
\newcommand{\eee}{\end{eqnarray*}}
\newcommand{\R}{{\mathbb R}}
\newcommand{\C}{{\mathbb C}}
\newcommand{\N}{{\mathbb N}}
\newcommand{\Z}{{\mathbb Z}}

\newcommand{\asy}{{\mathcal O}}

%\newcommand{\fint}{\tilde {\strokedint} }
%\newcommand{\fint}{\tilde {\int} }
%\newcommand{\fint}{{\int}\hskip -11pt \sim }
%%%%%%%%%%%%%%%%%%%%%%%%%%%%%%%%%%%

\begin{document}

\title [Existence of the Stark-Wannier quantum resonances] {Existence of the Stark-Wannier quantum resonances}

\author {Andrea SACCHETTI}

\address {Department of Physics, Computer Sciences and Mathematics, University of Modena e Reggio Emilia, Modena, Italy.}

\email {andrea.sacchetti@unimore.it}

\date {\today}

\thanks {This paper is dedicated to Professor Vincenzo Grecchi on the occasion of his 70th birthday}

\begin {abstract} In this paper we prove the existence of the Stark-Wannier quantum resonances for one-dimensional Schr\"odinger 
operators with smooth periodic potential and small external homogeneous electric field. \ Such a result extends the existence result 
previously obtained in the case of periodic potentials with a finite number of open gaps. 
\bigskip

{\it Ams classification (MSC 2010):} 81Qxx, 81Q15. 

\bigskip

{\it Keywords:} Quantum resonances, Stark-Wannier resonances, regular perturbation theory.

\end{abstract}

\maketitle

\section {Introduction} Existence of quantum resonances for the one-dimensional Schr\"odinger operator with a periodic potential and 
an external homogeneous electric field has been a largely debated question among the solid state community since the seminal papers by 
Wannier \cite {W1,W2}. \ These resonances, named Stark-Wannier resonances, are strictly connected with the time behavior of the Bloch 
oscillators \cite {GKK}, which electron wave function $\psi (x,t)$ is the solution of the time-dependent Schr\"odinger equation 
\bee
i \hbar \frac {\partial \psi}{\partial t} = H_F \psi 
\eee
where $H_F$ is the Stark-Wannier operator formally defined on $L^2 (\R , dx)$ as
\bee
H_F := H_B + F x, \ H_B:=  - \frac {\hbar^2}{2m}\frac {d^2}{dx^2} + V(x) \, , 
\eee
$Fx $ represents the Stark potential associated to the external homogeneous electric field, $H_B$ is the Bloch operator and $V(x)$ represents 
the periodic potential of an one-dimensional crystal; here we assume to choose the units such that $2m =1$ and $\hbar =1$ is fixed. 

Assuming that $V(x)$ is regular enough then the spectrum of $H_F$ covers the whole real axis. \ On the other side, if we neglect the 
interband coupling term then it turns out that the spectrum of such a decoupled band approximation consists of a sequence on infinite 
ladders of real eigenvalues. \ The crucial point is to understand what happen to these eigenvalues when we restore the interband coupling 
terms. \ This question has been largely debated \cite {W3,Z1,Z2} and at present it is expected that these ladders of real eigenvalues 
will turn into quantum resonances.

Despite its apparently simplicity this problem is rather technically complicated to study because the external potential $F x$ is 
not a bounded perturbation and so the interband coupling term is quite hard to manage.

A first result in order to put on a rigorous ground this problem was due to Herbst and Howland \cite {HH}, who gave a rigorous 
definition for quantum resonances for the Wannier-Stark operator by means of the complex translation method (in fact, this 
definition applies to the case of analytic periodic potential); that is a quantum resonance corresponds to a complex-valued 
eigenvalue of a non self-adjoint operator obtained by $H_F$ by means of the analytic translation. \ More precisely, quantum resonances 
for the Stark-Wannier operator are defined by Herbst and Howland as the complex eigenvalues of the analytically translated operator
\bee
H_F^\theta = {\mathcal U}^\theta H_F \left ( {\mathcal U}^\theta \right )^{-1} = - \frac {d^2}{dx^2} + V(x+\theta ) + 
F x + F \theta 
\eee
where $-R < \Im \theta < 0$, the periodic potential was assumed to be analytic in a set containing the strip $|\Im z | \le R$, and  
\bee
{\mathcal U}^\theta : L^2 (\R , dx) \to L^2 (\R , dx )
\eee
is the translation operator defined as 
\bee
\left ( {\mathcal U}^\theta \psi \right ) (x) = \psi (x +\theta ) \, .
\eee
In particular, they proved that 
\bee
\sigma_{ess} (H_F^\theta ) = \left \{ z \ :\ \Im z = F \Im \theta \right \} \, , \ 
\sigma_d (H_F^\theta) \subseteq \{ z \ : \ F \Im \theta < \Im z \le 0 \} 
\eee
and, let 
\bee
{\mathcal R} = \cup_{\theta \ : \ 0 < -\Im \theta < R} \sigma_d (H_F^{\theta} ) \, , 
\eee
then the kernel  of the resolvent operator $[z-H_F]^{-1}$ has an analytic continuation from $\Im z >0$ to $\{ z \ : \ \Im z > - R \} 
\setminus {\mathcal R}$, and if $z_0 \in {\mathcal R}$ then the kernel of the resolvent operator $[z_0 - H_F ]^{-1}$ 
has a pole at this point. \ The points of ${\mathcal R}$ with strictly negative imaginary part are called quantum resonances.

A second important result was obtained by proving that the spectrum of the Stark-Wannier operator is purely absolutely 
continuous \cite {BCDSSW}
\bee
\sigma (H_F) = \sigma_{ac} (H_F) = \R \, ;
\eee
hence, embedded eigenvalues are not admissible. \ By means of this result we can so exclude the possibility that the ladders of real 
eigenvalues for the decoupled band approximation will turn into real embedded eigenvalues when the coupling interband term is restored. \ In 
fact, this is not the case of singular periodic potentials; in particular, when the periodic potential is given by means of a periodic 
sequence of Dirac's $\delta$ pointwise interactions then we expect to observe embedded eigenvalues \cite {FGZ}, as well as in the case 
of a periodic sequence of Dirac's $\delta$-prime pointwise interactions \cite {MS}.

Finally, the proof of existence of ladders of quantum resonances was given in the limit of small electric field and under a 
technical assumption: it had been assumed that the number of open gaps is finite \cite {GMS}. \ We should recall that, in 3-dimensional 
crystals, the number of open gaps is usually finite; this is not the case of 1-dimensional crystals where we have a finite 
number of open gaps only for periodic potentials given by a class of elliptic functions.

In this paper we are able to remove such a technical assumption on the periodic potential, that is we prove the existence of 
quantum resonances for electric field small enough and for any periodic potential analytic in a finite strip containing the 
real axis. \ In our proof we develop some ideas already contained in \cite {GMS}, the main improvement is that we are able to 
manage here the interband coupling term (see Theorem \ref {Theorem3} and Theorem \ref {Theorem3Bis}), even in the case of infinitely many open gaps, 
by means of precise estimates of the asymptotic behavior of the Bloch functions around the Kohn branch points.

In our treatment we replace also the usual definition of quantum resonances, given by Herbst and Howland, by considering the complex 
eigenvalues of the non self-adjoint operator obtained by $H_F$ in the \emph {crystal momentun representation} (CMR) by means 
of the analytic distortion $\hat \psi (p) \to e^{i p \theta } \hat \psi (p)$ for $p$ large enough and $\Im \theta <0$; 
where $\hat \psi$ represents the wave-function in the crystal momentum representation and $p$ is the crystal momentum variable.

We close by mentioning the fact that the problem of the existence of the Stark-Wannier quantum resonances and the time behavior 
of Bloch oscillators have been the object of a large analysis, where many different techniques have been adopted; among all the papers 
we recall the works by Nenciu \cite {N1,N2} and H\"overmann, Spohn and Teufel \cite {HST}, where adiabatic methods has been applied, the works by Buslaev and 
Dmitrieva \cite {BD}, where a WKB-type expansion has been adapted in order to match the asymptotic behaviour of the wavefunction in a 
neighbourhood of the turning points in the tilted band picture, the works by Bentosela and Grecchi \cite {BG} and by Combes 
and Hislop \cite {CH}, where the problem of the existence of the Stark-Wannier resonances has been considered in the semiclassical 
limit $\hbar \to 0$, and finally we mention the paper by Avron \cite {Av}, where he considered the spectral problem in the case of 
complex-valued $F$.

The paper is organized as follows.

In \S 2 we state our main result.

In \S 3 we collect some known results about band and Bloch functions.

In \S 4 we give the asymptotic behavior of the Bloch functions in the high energy limit.

In \S 5 we adapt the CMR to the analytic structure of the Bloch functions, the obtained representation is named \emph {extended crystal 
momentum representation} (ECMR), and then we consider the Stark-Wannier operator $H_F$ in the ECMR.

In \S 6 we define the analytic distortion in the ECMR.

In \S 7 we study the spectral properties of the decoupled band approximation of the analytically distorted Stark-Wannier operator in the ECMR.

In \S 8 we prove that the coupling term between the bands is a bounded operator and then we define quantum resonances as complex valued 
eigenvalues of the analytically distorted Stark-Wannier operator in the ECMR. \ We then restore the coupling term and we prove, by means of the regular perturbation theory, that the 
ladders of eigenvalues of the decoupled band approximation turns into complex eigenvalues of the distorted Stark-Wannier operator contained 
in the strip $|\Im z| \le \frac 12 F |\Im \theta |$. Furthermore, we prove that the essential spectrum of the distorted Stark-Wannier operator 
lies below the line $\frac 12 F \Im \theta $, where $\Im \theta <0$. \ In such a way the existence of Stark-Wannier quantum resonances will 
follow.

\clearpage

\subsection* {Notation} In the following:

\begin {itemize}

\item [-] we denote by $E_n (k)$ and $\varphi_n (x,k)$ the band and Bloch functions where $n \in \N$ and $k$ is the quasi-momentun 
(or crystal momentum) variable belonging to the Brillouin zone ${\mathcal B} = \left ( - \frac 12 b , + \frac 12 b \right ]$, 
$b= \frac {2\pi}{a}$ where $a$ is the period of the periodic potential;

\item [-] we denote by $E(p)$ and 
\bee
\varphi (x,p) = e^{ipx} u(x,p) 
\eee
the multisheeted band and Bloch functions, where the quasimomentum $p$ belongs to the complex plane $\tilde \C$ with cuts connecting the 
Kohn branch points $k_n = \pm \frac 12 n b \pm i \kappa_n$, $\kappa_n \ge 0$ (actually, if $\kappa_n =0$ then the gap is closed and there 
is no branch point);

\item [-] since we have to work with meromorphic functions $f(p)$ defined on $\tilde \C$ then, by construction, $f(p)$ 
would have discontinuity points on $\R \setminus \tilde \R$, where $\tilde \R = \R \cap \tilde \C$; we'll denote by $\fint_A f(p) dp$, where $A$ in an interval in $\R$, the integral of the meromorphic function $f(p)$ over the set $A\cap \tilde \C$; 

\item [-] we denote by $\Gamma$ the complex path such that $E(p)$ monotonically maps $\Gamma$ onto $\R$ (see Figure \ref {Fig1}). \ We 
denote also by
\bee
\Gamma_1 = \left [- \frac 12 N b , +\frac 12 N b \right ] \cap \tilde \R \ \ \mbox { and } \ \Gamma_c = \tilde \R \setminus \Gamma_1 
\eee
where $N$ is fixed and such that $\kappa_N >0$; by definition $\fint_{\Gamma_1} f(p) dp$ coincides 
with $\int_{\Gamma_1} f(p) dp$ and, similarly, $\fint_{\Gamma_c} f(p) dp$ coincides with $\int_{\Gamma_c} f(p) dp$;

\item [-] We define ${\mathcal H}_E$ the Hilbert space of functions $\hat \psi (p) \in L^2 (\tilde \R , dp)$ such that 
\bee
\frac {d^r \hat \psi \left ( - \frac 12 Nb \pm 0 \right )}{dp^r} = \frac {d^r \hat \psi \left (  \frac 12 Nb \mp 0 \right )}{dp^r}, \ 
\forall r \in \N \, . 
\eee
Hence, we may identify the vector $\hat \psi$ with the vector $(\hat \psi_1 , \hat \psi_c ) \in L^2 (\Gamma_1 ) 
\otimes L^2 (\Gamma_c )$ satisfying the periodic boundary conditions at the points $\pm \frac 12 N b$, that is
\bee
\psi_1 \left ( - \frac 12 N b +0 \right ) = \psi_1 \left ( + \frac 12 N b -0 \right ) 
\eee
and
\bee 
\psi_c \left ( - \frac 12 N b -0 \right ) = \psi_c \left ( + \frac 12 N b +0 \right )
\eee
together with their derivatives. \ In particular, in the case $N=1$ then $L^2 (\Gamma_1 )$ coincides with the space $L^2 ({\mathcal B})$.

\end {itemize}

\section {Main Result} 

Here we state the assumptions on the model and the main result.

\begin {theorem} \label {Theorem1}
Let $V(x)$ be a non constant periodic function with period $a$
\bee
V(x) = V(x+ a) \, , \ \forall x \in \R \, , 
\eee
and mean value zero:
\be
\frac 1a \int_0^a V(x) dx =0 \, . \label {Eqn0}
\ee
We assume that $V(x)$ is an analytic function in an open set containing the strip $|\Im x| \le R$ for some $R>0$. \ Let 
$N\ge 1$ any fixed integer number such that the $N$-th gap of the Bloch operator $H_B$ is not empty. \ Then there exists $F_N >0$ such 
that for any $F\in (0,F_N)$ the Stark-Wannier operator $H_F$ admits $N$ ladders of quantum resonances; that is the resolvent 
operator $[H_F -z]^{-1}$ has an analytic continuation from the upper half-plane $\Im z >0$ to the lower half-plane $\Im z <0$, with 
poles corresponding to the quantum resonances.
\end {theorem}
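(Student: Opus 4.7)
The plan is to follow the roadmap announced in the introduction and realize $H_F$ in the extended crystal momentum representation (ECMR), where the Bloch operator $H_B$ acts as multiplication by the multisheeted band function $E(p)$ on the cut plane $\tilde{\C}$ and the Stark potential $Fx$ becomes the differential operator $iF\frac{d}{dp}$. Under this identification the operator decomposes as $H_F = H_F^{(0)} + W$, where $H_F^{(0)}$ is the decoupled band approximation and $W$ is the interband coupling term, whose kernel is built from the $k$-derivatives of the Bloch functions $\varphi_n(x,k) = e^{ikx}u_n(x,k)$. The essential use of the ECMR rather than the ordinary CMR is that the monodromy around the Kohn branch points $\pm\frac12 nb \pm i\kappa_n$ is absorbed into the geometry of $\tilde{\C}$, so that $\varphi(x,p)$ becomes a single-valued analytic object on the cut plane.

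First, I would introduce the analytic distortion $\hat{\psi}(p) \mapsto e^{ip\theta}\hat{\psi}(p)$ on $\Gamma_c$ with $\Im\theta < 0$, producing a non-self-adjoint operator $H_F^\theta$ on ${\mathcal H}_E$. For the decoupled approximation $H_F^{(0),\theta}$, one verifies that the essential spectrum is pushed into the half-plane $\{\Im z \le F\Im\theta\}$, while on each of the first $N$ bands (the bands before any cut is crossed, corresponding to $\Gamma_1$) the spectrum consists of a genuine Wannier--Stark ladder of real eigenvalues $\bar{E}_n + mFa$, $m \in \Z$, with $\bar{E}_n$ the mean of the $n$-th band. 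These ladders persist after distortion because the distortion is supported outside $\Gamma_1$. This is the content of the decoupled analysis in Section 7.

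The heart of the argument, and the main obstacle, is to show that the distorted interband coupling $W^\theta$ remains a bounded operator on ${\mathcal H}_E$, uniformly as $F \to 0^+$. The difficulty is twofold: infinitely many bands contribute, and the Bloch functions develop singularities at the Kohn branch points $\pm\frac12 nb \pm i\kappa_n$ whose rate of accumulation dictates the size of the off-diagonal matrix elements in the high-band limit. Here I would invoke the asymptotic expansions for $\varphi_n$ near the branch points developed in Section 4, combined with the ECMR factorization of $\varphi(x,p)$ from Sections 5--6, to estimate the kernel of $W^\theta$ fint-by-fint along $\tilde{\R}$. This gives the boundedness $\|W^\theta\| < \infty$ claimed in Theorems \ref{Theorem3} and \ref{Theorem3Bis}; importantly, the bound must be of order independent of $F$ (or at worst $o(F|\Im\theta|)$) so that the perturbation is controlled by the spectral gap produced by the distortion.

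Once $W^\theta$ is bounded, the conclusion is a matter of regular analytic perturbation theory. Because the first $N$ ladders of $H_F^{(0),\theta}$ lie on the real axis, the essential spectrum lies in $\{\Im z \le F\Im\theta\}$, and the discrete ladder points are separated by $Fa$ on the real axis, the spectral resolvent of $H_F^{(0),\theta}$ enjoys an estimate of order $(F|\Im\theta|)^{-1}$ on contours of radius $\sim\frac12 F|\Im\theta|$ around each unperturbed eigenvalue. Choosing $F_N$ small enough that $\|W^\theta\| \cdot (F|\Im\theta|)^{-1} < 1$ on these contours, the Neumann series for $(z - H_F^\theta)^{-1}$ converges and standard Kato--Rellich arguments yield $N$ ladders of complex eigenvalues of $H_F^\theta$ inside the strip $|\Im z| \le \frac12 F|\Im\theta|$. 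Pulling the distortion back to the physical side via the Herbst--Howland framework identifies these eigenvalues with poles of the meromorphic continuation of $[H_F - z]^{-1}$ across the real axis, thereby producing the desired Stark-Wannier resonances.
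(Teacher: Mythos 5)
Your overall architecture matches the paper's: ECMR, analytic distortion $e^{ip\theta}$ on $\Gamma_c$, decoupled band approximation with real ladders on $\Gamma_1$ and essential spectrum pushed to $\Im z = F\Im\theta$, boundedness of the interband coupling, then perturbation theory. However, there is a genuine gap in the final step, and it is precisely the point where the paper's main technical innovation lives. You propose to close the argument by "choosing $F_N$ small enough that $\|W^\theta\|\cdot(F|\Im\theta|)^{-1}<1$." This cannot be achieved by shrinking $F$: the perturbation is $W^\theta = F K^\theta$ with $\|K^\theta\|\le C_1$ \emph{independent of $F$} (Lemma \ref{Lemma9}), while the unperturbed resolvent on the contours $\gamma_j$ (which have size $\sim F$, since the ladder spacing is $Fa$ and the essential spectrum sits at distance $F|\Im\theta|$) is only bounded by $C/F$ (Lemma \ref{Lemma8}). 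The product is $F\cdot C_1\cdot C/F = C_1 C$, an $F$-independent constant with no reason to be less than $1$. So the naive Neumann series is not controlled in the regime $F\to 0$, and your hedge that the bound be "at worst $o(F|\Im\theta|)$" is not available either — the interband coupling is genuinely of order $F$, not $o(F)$.

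The paper resolves this by rearranging the series (Eq. (\ref{Eqn36})) in terms of the doubly-sandwiched operator ${\mathcal K}^\theta(z)=[\tilde H^\theta_{F,0}-z]^{-1}\eta K^\theta[\tilde H^\theta_{F,0}-z]^{-1}$ and proving the improved bound $\|{\mathcal K}^\theta(z)\|\le C_2\eta F^{-3/2}$ (Theorem \ref{Theorem3Bis}) rather than the naive $C\eta F^{-2}$. The extra factor $F^{1/2}$ is extracted by a stationary-phase analysis (Lemma \ref{Lemma11}) of the oscillatory kernel $e^{-\frac iF\fint_q^{q-jb}E(s)\,ds}$, whose stationary points are exactly the Kohn branch points; the convergence condition then reads $C_1C_2\eta^2<F^{3/2}$, which \emph{is} satisfied at $\eta=F$ for small $F$. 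Without this (or an equivalent oscillation/averaging argument), your proof does not yield $F_N>0$. Two smaller omissions: the identification of the persistent eigenvalue count uses the constancy of $\dim\mathrm{Ran}\,P^\theta(\eta)$ along $\eta\in[0,F]$ for an analytic family of type (A), and the strict negativity $\Im{\mathcal E}_{1,0}(F)<0$ is not automatic from perturbation theory — the paper derives it from the $\theta$-independence of the eigenvalues together with the absolute continuity of $\sigma(H_F)$, which forbids real embedded eigenvalues.
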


\begin {remark}
In fact, the assumption (\ref {Eqn0}) on the mean value of $V(x)$ is for sake of definiteness. \ Indeed, if the mean value of the potential 
is not zero then we can always recover condition (\ref {Eqn0}) by means of the phase gauge $\psi \to \psi e^{-i t \int_0^a V(x) dx /a \hbar }$. 
\end {remark}

\begin {remark}
The regular perturbation theory we apply in \S 7 gives that the imaginary part of the quantum resonances, which is strictly negative, is 
of order $O(F^2)$ in the limit of small $F$. \ In fact, we expect that the imaginary part is exponentially small in the limit of small $F$ as 
it has been proved in the case of periodic potential with a finite number of gaps \cite {GS}; however, we don't dwell here on such a problem.
\end {remark}

\section {Preliminary results} \label {Sec3}

Here we consider the spectral problem for Stark-Wannier operator
\bee
H_F := - \frac {d^2}{dx^2} + V(x) + F x \, , 
\eee
where $F >0$ is a given parameter and where $V(x)$ is a ({\it not constant}) smooth periodic potential with period $a$:
\bee
V(x) = V(x+ a) 
\eee
and mean value zero.

We recall now some basic properties of the Bloch and band functions \cite {RS}.

\subsection {Bloch Decomposition}

Let $H_B$ be the Bloch operator formally defined on $L^2 (\R , dx)$ as
\bee
H_B := - \frac {d^2}{dx^2} + V(x) \, .
\eee
Its spectrum is given by bands. \ Let $k \in {\mathcal B}$ be the quasi-momentum (or crystal momentum) variable, the torus 
${\mathcal B} = \R / b\Z = \left ( - \frac 12 b, + \frac 12 b \right ]$, where $b= \frac {2\pi}{a}$, is usually named Brillouin zone.

Let $\varphi_n (x,k)$ denote the Bloch functions associated to the band functions $E_n (k)$; where the Bloch functions are assumed to be 
normalized to $1$ on the interval $[0,a]$: 
\be
\int_0^a \overline {\varphi_m (x,k)} \varphi_n (x,k)  d x =\delta_m^n \, . \label {Eqn1}
\ee
The band and Bloch functions satisfy to the following eigenvalues problem 
\be
H_B \varphi = E \varphi \label {Eqn1Bis}
\ee
with quasi-periodic boundary conditions (hereafter we denote $' = \frac {\partial }{\partial x}$, as usual)
\bee
\varphi (a,k) =e^{i ka} \varphi (0,a) \ \mbox { and } \ \varphi' (a,k) =e^{i ka} \varphi' (0,k) \, . 
\eee
The Bloch functions $\varphi_n$ may be written as
\bee
\varphi_n (x,k) = e^{i k x} u_n (x,k) 
\eee
where $u_n (x,k)$ is a periodic function with respect to $x$: $u_n (x+a , k) = u_n (x,k)$. \ For any fixed $k \in {\mathcal B}$ the 
spectral problem (\ref {Eqn1Bis}) has a sequence of real eigenvalues
\bee
E_1 (k) \le E_2 (k) \le \cdots \le E_n (k) \le \cdots \, ,
\eee
such that $\lim_{n\to \infty} E_n (k) = + \infty$. 

As functions on $k$, both Bloch and band functions are periodic with respect to $k$:
\bee
E_n (k) =E_n (k+b) \ \mbox { and } \ \varphi_n (x,k)=\varphi_n (x,k+b) \, ,
\eee
and they satisfy to the following properties for any real-valued $k$:
\bee
\varphi_n (x,-k)= \overline {\varphi_n (x,k)} \ \ \mbox { and } \ \ {E}_n (-k) = {E}_n (k) \, . 
\eee
Furthermore, if $V(x)$ is an even potential then $\varphi_n (-x,k)= \overline {\varphi_n (x,k)}$, $\varphi_n (x,0)$ are even functions 
while $\varphi_n (x, b/2)$ are odd functions. 

The band functions $E_n (k)$ are monotone increasing (resp. decreasing) functions for any $k \in \left [ 0 , \frac 12 b \right ]$ if 
the index $n$ is an odd (resp. even) natural number. \ The spectrum of $H_B$ is purely absolutely continuous and it is given by bands:
\bee
\sigma (H_B) = \cup_{n=1}^\infty [E^b_n , E^t_n] \ \ \mbox { where } \ \ [E^b_n , E^t_n] = \{ E_n (k) ,\ k \in {\mathcal B} \} \, . 
\eee
In particular we have that 
\bee
E^b_n = 
\left \{
\begin {array}{ll}
E_n (0) & \ \mbox { for odd } n \\ 
E_n (b/2) & \ \mbox { for even } n
\end {array}
\right. \ \mbox { and } \ 
E^t_n = 
\left \{
\begin {array}{ll}
E_n (b/2) & \ \mbox { for odd } n \\ 
E_n (0) & \ \mbox { for even } n
\end {array}
\right. \, . 
\eee
The intervals $(E^t_n, E^b_{n+1})$ are named gaps; a gap $(E^t_n, E^b_{n+1})$ may be empty, that is $E^b_{n+1}=E^t_n$, or not. \ It is 
well known that, in the case of one-dimensional crystals, all the gaps are empty if, and only if, the periodic potential is a constant 
function. \ Because we assume that the periodic potential is not a constant function then one gap, at least, is not empty.

From the Bloch decomposition formula it follows that any vector $\psi \in L^2$ can be written as \cite {OK}
\bee
\psi (x) = \sum_{n \in \N}  \frac {1}{2\pi} \int_{{\mathcal B}} \varphi_n (x,k ) \hat \psi_n (k ) d k  \, . 
\eee
The family of functions $\{ \hat \psi_n (k ) \}_n$ is called the crystal momentum 
representation (hereafter CMR) of the wave vector associated to $\psi$  and it is defined as
\bee
\hat \psi_n (k ) = \int_{\R} \overline {\varphi_n (x, k )} \psi (x)  d x \, . 
\eee
By construction any function $\hat \psi_n (k)$ is a periodic function and the transformation 
\be
\psi \in L^2 (\R , dx) \to {U} \psi = \hat \psi := \{ \hat \psi_n \}_{n=1}^{\infty} \in {\mathcal H} := \otimes_{n=1}^\infty 
L^2 ({\mathcal B} , d k ) \label {Eqn1Ter}
\ee
is unitary:
\bee
\| \psi \|_{L^2 (\R , dx )}^2 = \left \| \hat \psi \right \|_{\mathcal H}^2 :=  \sum_{n=1}^\infty \| \hat \psi_n \|_{L^2 ({\mathcal B}, 
dk )}^2\, .
\eee

\subsection {Analytic properties of the band and Bloch functions} \label {Sec3.2} The band  functions are analytic functions with 
respect to $k \in \C$; more precisely, they are the branches of a single multisheeted function that has no singularities other that the square 
root branch points (hereafter named Kohn branch points) $k_n = \frac 12 n b \pm i \kappa_n$, $n=\pm 1 , \pm 2 , \ldots , $, for 
some $\kappa_n =\kappa_{-n} \ge 0$ \cite {K}. \ In fact, the Kohn branch points only occur in correspondence of open gaps. \ If the 
$n-$th gap is open (that is $E^b_{n+1}>E^t_n$) then $\kappa_n >0$. \ On the other side, if the $n-$th gap is empty (that is 
$E^b_{n+1}=E^t_n$)then the Kohn branch points $\pm k_n$ don't occur; actually, in such a case we have that $\kappa_n =0$ and the two 
branch points $k_n$ and $\bar k_n$ coincide and then the multisheeted band and Bloch functions are regular at this point.

Let $\ell_n$ be the straight lines connecting the branch points $\kappa_n$ and $\bar \kappa_n$, let $\tilde \C =\C -\cup_{n=\pm 1,\pm 2, 
\ldots } \ell_n$. \ Then there exists an analytic multisheeted function ${E}(p)$ (hereafter named \emph {multisheeted band function}) such 
that 
\be
E_n (k) = 
\left \{
\begin {array}{ll}
{E} \left [ \frac 12 (n-1) b + k \right ] &   \ \mbox { if } \ n \ \mbox { is odd} \\ 
   &   \\ 
{E} \left [ \frac 12 n b - k \right ]  & \ \mbox { if } \ n \ \mbox { is even}
\end {array}
\right. \, ,  k \in \left [ 0, \frac 12 b \right ] \,  . 
\ee
Furthermore, the multisheeted band function is such that $E (\bar p)= \overline E (p)$.

Let us denote by $\Gamma$ the complex path defined as in Figure \ref {Fig1}, then the multisheeted band function ${E}(p)$ monotically maps 
$\Gamma$ onto $\R$ \cite {F}. \ In particular ${E}(p)$ maps the real intervals $\left [ \frac {n-1}{2} b , \frac n2 b \right ]$ onto the 
interval band $[E^b_n , E^t_n ]$, and it maps the clockwise path surrounding the cut $\ell_n$ associated to the Kohn branch point $k_n$ 
to the gap $(E^t_n , E^b_{n+1})$.

Similarly, the Bloch functions $\varphi_n (x,k)$ are the branches of a single multisheeted Bloch function $\varphi (x,p)$ (hereafter 
named \emph {multisheeted Bloch function}) that has no singularities other than the Kohn branch points. \ As before, we may write 
\bee
\varphi (x,p) = e^{i p x} u (x,p)
\eee
where $u(x,p)=u(x+a,p)$ is a periodic function.

Then the multisheeted band function and the multisheeted Bloch function are defined for any $p \in \tilde \C$ and, in particular, on 
the set $\tilde \R = \tilde \C \cap \R$, are such that
\be
E\left ( - \frac 12 n b \pm 0 \right ) = E\left ( \frac 12 n b \mp 0 \right ) \ \mbox { and } \ \varphi \left ( x, - \frac 12 n b \pm 0 
\right ) = \varphi \left ( x, \frac 12 n b \mp 0 \right ) \label {Eqn2Bis}
\ee
together with their derivatives, for any $n$ such that the $n^{th}$ gap is open. \ In particular, we have that
\bee
E\left ( - \frac 12 n b + 0 \right ) = E\left ( \frac 12 n b - 0 \right ) = E^t_n 
\eee
and 
\bee 
E\left ( - \frac 12 n b - 0 \right ) = E\left ( \frac 12 n b + 0 \right ) = E^b_{n+1} \, . 
\eee

\begin{figure} [ht]
\begin{center}
\includegraphics[height=6cm,width=12cm]{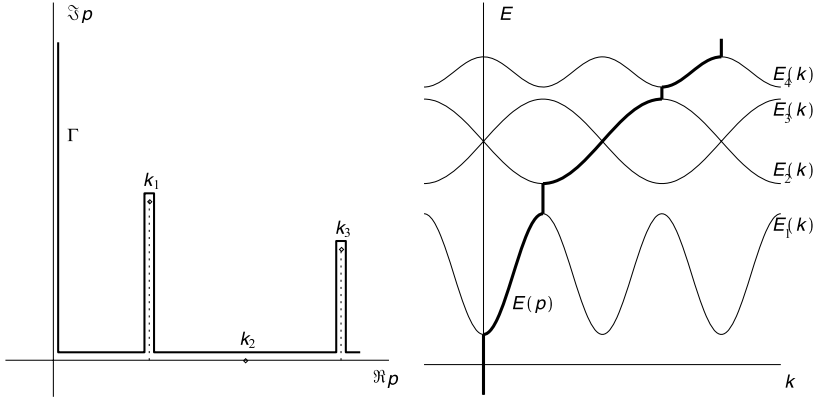}
\end{center}
\caption{\label {Fig1} In the left panel we plot the path $\Gamma$ in the quasimomentum complex plane where $k_1$, $k_2$ and $k_3$ 
are the first 3 Kohn's branch points. \ In the right panel we plot the first 4 band functions and the multisheeted function $E(p)$ 
which maps $\Gamma$ onto the whole real axis (bold line). \ When $p$ is real valued then $E(p)$ takes values in the spectrum of $H_B$; 
when $p$ belongs to the imaginary axes then the energy $E(p)$ lies below the spectrum of $H_B$; and finally, when $p$ belongs the 
part of $\Gamma$ surrounding the complex cuts $\ell_n$ then the energy $E(p)$ belongs to the gaps. \ In this picture we have the situation 
where the gap between the $2^{nd}$ and $3^{rd}$ band is empty, this means that the two square root Kohn's branch points $k_2$ and 
$\bar k_2$ coincides and then the multisheeted function $E(p)$ is regular at this points, in fact. \ We should remark that the 
\emph {periodic} band function $E_2 (k)$ and $E_3 (k)$, and the associated Bloch functions, are \emph {singular} at the point $k=0$; 
this ''artificial'' singularity is a consequence of the periodicity of the band functions and from the fact that the gap between the 
$2^{nd}$ and the $3^{rd}$ band is empty.}
\end{figure}

\section {Asymptotic behavior of the multisheeted Bloch and band functions} \label {Sec4}

We consider the differential equation
\be
- \varphi'' (x) + V(x) \varphi (x) = E \varphi (x) \label {Eqn3}
\ee
where $V(x)=V(x+a)$ is a smooth periodic function. \ We denote by $\phi_1 (x,E)$ and $\phi_2(x,E)$ two independent solutions of this 
equation with initial conditions
\be
\phi_1 (0,E)=1 , \ \phi_1 ' (0,E)=0 \ \mbox { and } \ \phi_2 (0,E)=0 , \ \phi_2 ' (0,E)=1 \, .  \label {F7}
\ee 

The general solution of equation (\ref {Eqn3}) may be written as 
\be
\varphi := \varphi (x,E) = \alpha \phi_1 (x,E) + \beta \phi_2 (x,E) \label {Eqn5}
\ee
where the parameters $\alpha$ and $\beta$ are chosen in order to normalize to one the solution $\varphi$:
\be
\int_0^a |\varphi (x,E)|^2 dx =1\, . \label {Eqn6}
\ee

Let $p \in \R$ and let 
\bee
\lambda = e^{i p a }\, . 
\eee
If we look for solutions satisfying the boundary conditions
\be
\varphi (a,E) = \lambda \varphi (a,E) \ \mbox { and } \ \varphi' (a,E) = \lambda \varphi' (a,E)   \label {F10}
\ee
then $\lambda$ should satisfies the equation
\bee
\lambda^2 - 2 \mu (E) \lambda + 1 =0
\eee
where $\mu (E)= \frac 12 \left [ \phi_1 (a,E) + \phi_2' (a,E) \right ]$. \ In conclusion, $p$ may be written as function of $E$ 
(and viceversa) by means of the relation
\bee 
\cos (p a) = \mu (E) \, , \ \mbox { that is } \ \frac 12 \left ( \lambda + \frac {1}{\lambda} \right ) = \mu (E) \, . 
\eee 
The solution $E=E(p)$ of the equation above is the multisheeted band function, and the associated wave-function  
$\varphi = \varphi (x,p) = \varphi [x,E(p)]$ is the multisheeted Bloch function.

\begin {remark} 
In the case of symmetric potentials $V(x)=V(-x)$ then it follows that $ \phi_1 (a,E) = \phi_2' (a,E) $ (see, e.g., \S 1.3 \cite {MW}) 
and we have that
\bee
\mu (E) = \phi_1 (a,E)\, .
\eee
\end {remark}

We prove now that

\begin {lemma} \label {Lemma1Bis}  
Let $\phi_1 (x,E)$ and $\phi_2 (x,E)$ be the solutions of Eq. (\ref {Eqn3}) satisfying the initial condition (\ref {F7}). \ Let $\varphi (x,E)$ be the normalized solution of Eq. (\ref {Eqn3}) satisfying the boundary conditions (\ref {F10}). \ Then
\be
\varphi (x,E) = C(E)  \left [ \phi_1 (x,E) + i \sqrt E \phi_2 (x,E) \right ] \, , \ C(E) = \left [ \sqrt {\frac {1}{a}} + O(E^{-3/2} ) \right ] \, , \label {Eqn12Bis} 
\ee
for large $E$. \ The above asymptotic behavior can be derived with respect to $E$ term by term.
\end {lemma}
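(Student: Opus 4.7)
My plan is to derive the asymptotics of $\phi_1,\phi_2$ through their Volterra integral representations, then use the boundary conditions (\ref{F10}) to identify the Bloch eigenvector, and finally compute the normalization constant. Setting $\kappa := \sqrt{E}$, the initial-value problem for $\phi_1$ is equivalent to
\[
\phi_1(x,E) = \cos(\kappa x) + \frac{1}{\kappa}\int_0^x \sin\bigl(\kappa(x-t)\bigr) V(t)\,\phi_1(t,E)\,dt,
\]
and analogously $\phi_2$ has the same equation with free term $\sin(\kappa x)/\kappa$. The combination $\psi := \phi_1 + i\kappa\,\phi_2$ satisfies the same type of Volterra equation but with free term $e^{i\kappa x}$, so for $\kappa$ large enough Neumann iteration converges (by the standard contraction argument on the compact interval $[0,a]$) and yields the Born expansion $\psi(x,E) = e^{i\kappa x} g(x,E)$ with
\[
g(x,E) = 1 - \frac{iW(x)}{2\kappa} + \frac{g_2(x,E)}{\kappa^2} + O(\kappa^{-3}), \qquad W(x) := \int_0^x V(t)\,dt,
\]
where $g_2$ is computable explicitly from the second iterate, and oscillatory remainders at each step are controlled via integration by parts using the smoothness of $V$.

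To identify $\varphi$, I write $\varphi = \alpha(E)\phi_1 + \beta(E)\phi_2$ and impose (\ref{F10}); this yields the transfer-matrix eigenvalue problem. The $x=a$ values of the Step~1 asymptotics give $\mu(E) = \cos(\kappa a) + O(\kappa^{-2})$, hence the Bloch multiplier $\lambda(E) = e^{i\kappa a}[1 + O(\kappa^{-2})]$, from which the eigenvector ratio
\[
\frac{\beta}{\alpha} = \frac{\lambda - \phi_1(a,E)}{\phi_2(a,E)} = i\kappa\bigl[1 + O(\kappa^{-2})\bigr]
\]
for $E$ in the interior of a band. Thus the Bloch function is, to the indicated order, proportional to $\phi_1 + i\kappa\phi_2$; the discrepancy between $\beta/\alpha$ and $i\kappa$ is of the same order as the remainder claimed for $C(E)$, so it is absorbed into the final asymptotic statement.

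The delicate step is producing the sharp remainder $O(E^{-3/2}) = O(\kappa^{-3})$ in the normalization. From Step~1 one has $|g(x,E)|^2 = 1 + 2\Re(a_1)/\kappa + (|a_1|^2 + 2\Re(a_2))/\kappa^2 + O(\kappa^{-3})$, with $a_1 = -iW(x)/2$. Since $a_1$ is purely imaginary the $O(\kappa^{-1})$ term vanishes pointwise. At order $\kappa^{-2}$ one has $|a_1|^2 = W(x)^2/4$; a direct computation of the second iterate gives $2\Re(a_2) = -W(x)^2/4 + \chi(x,E)$ with $\chi$ oscillatory in $\kappa$, so the non-oscillatory pieces cancel pointwise, while $\int_0^a \chi(x,E)\,dx = O(\kappa^{-1})$ by a further integration by parts (the boundary contributions vanishing thanks to the mean-zero hypothesis (\ref{Eqn0})). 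Altogether $\int_0^a |g|^2\,dx = a + O(\kappa^{-3})$, and the normalization (\ref{Eqn6}) yields $C(E) = \sqrt{1/a} + O(E^{-3/2})$. Term-by-term differentiability in $E$ then follows because each Born iterate is analytic in $E$ and the contraction estimates apply uniformly to $\partial_E$-derivatives on $[0,a]$.

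The main obstacle is precisely this second-order cancellation: the first-iterate bound alone yields only $O(E^{-1})$, and the sharp rate $E^{-3/2}$ is an essential consequence of the mean-zero condition $\int_0^a V\,dx = 0$ together with the exact cancellation $|a_1|^2 + 2\Re(a_2)\equiv\chi\pmod{\text{oscillatory}}$ at order $\kappa^{-2}$, which is not visible from the leading Born approximation.
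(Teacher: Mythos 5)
Your proposal is correct, and its first two steps (Volterra/Born iteration for $\phi_1,\phi_2$ and identification of the Bloch eigenvector through the transfer matrix, i.e.\ $\beta/\alpha=(\lambda-\phi_1(a))/\phi_2(a)$) coincide with what the paper does. Where you genuinely diverge is the normalization, which you rightly single out as the delicate step. The paper does \emph{not} expand $\int_0^a|\psi|^2\,dx$ at all: it invokes Kohn's exact normalization identity, writing the normalized Bloch function as
\[
\varphi(x)=\frac{\phi_2(a)\phi_1(x)-(\phi_1(a)-\lambda)\phi_2(x)}{\sqrt{-2\phi_2(a)\,d\mu/dE}}\,,
\]
with the closed form $N(E)=-2\phi_2(a)\,d\mu/dE$ derived in the appendix by a Green's-identity/$\delta E$ argument; the rate $O(E^{-3/2})$ then falls out of the endpoint asymptotics $\phi_2(a)=\sin(\sqrt Ea)/\sqrt E+O(E^{-3/2})$ and $d\mu/dE=-\tfrac{a}{2\sqrt E}\sin(\sqrt Ea)+O(E^{-2})$, both of which use $Q(a)=\int_0^aV=0$. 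Your route instead verifies the cancellations in $\int_0^a|g|^2\,dx$ order by order: the $\kappa^{-1}$ term vanishes because $a_1=-iW/2$ is purely imaginary, and at $\kappa^{-2}$ the second iterate contributes $-W^2/8\kappa^2$ per factor, i.e.\ $2\Re(a_2)=-W^2/4$ plus oscillatory and mean-zero pieces, cancelling $|a_1|^2=W^2/4$ — this checks out, with the one bookkeeping caveat that the non-oscillatory $V(x)/(2\kappa^2)$ contribution coming from the oscillatory correction to $g^{(1)}$ is killed directly by the mean-zero hypothesis rather than by boundary terms in an integration by parts. What each approach buys: the paper's identity packages the second-order cancellation exactly and requires only endpoint data, but it needs the separate appendix derivation to cover non-symmetric potentials; your direct computation is more laborious but self-contained and treats symmetric and non-symmetric $V$ uniformly. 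Both arguments share the same unaddressed weakness near band edges, where $\sin(\sqrt Ea)\to0$ (equivalently $\lambda\to\pm1$) and the quotient $\beta/\alpha$ degenerates; you at least flag the restriction to the interior of a band, while the paper appeals to continuity at $\lambda=\pm1$.
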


\begin {proof}
Hereafter, we drop out the dependence on $E$ when this does not cause misunderstanding. \ As a first step we prove that the equation (\ref {Eqn3}) has non-trivial solutions with $\phi_2 (a)=0$ (or $\phi_1'(a)=0$) only for values of $E$ such that $\lambda = \pm 1$. \ Indeed, the quasiperiodic boundary conditions implies that 
\bee
\left \{
\begin {array}{lcl}
\left [ \phi_1 (a) - \lambda \right ] \alpha + \phi_2 (a) \beta &=& 0 \\ 
\phi_1' (a) \alpha + \left [ \phi_2' (a) - \lambda \right ] \beta &=& 0 
\end {array}
\right.
\eee
that is
\bee
\left \{
\begin {array}{lcl}
\frac 12 \left [ \frac {1}{\lambda} - \lambda \right ] \alpha + \phi_2 (a) \beta &=& 0 \\ 
\phi_1' (a) \alpha + \frac 12 \left [ \frac {1}{\lambda} - \lambda \right ] \beta &=& 0 
\end {array}
\right.
\eee
If we assume that $\lambda \not= \pm 1$ then $\phi_2 (a) \not=0$ and $\phi_1'(a) \not= 0$. \ Indeed, if, for instance, $\phi_2 (a)=0$ 
then $\alpha =0$ and finally even $\beta $ should be equal to zero, that is we only have a trivial solution. 

Thus, assuming that $\lambda \not= \pm 1$ the wave function (\ref {Eqn5}) may be written in the form \cite {K} (adapted to the normalization 
condition (\ref {Eqn6}))
\be
\varphi (x) = \frac {\phi_2 (a) \phi_1 (x) + \frac 12  \left ( \frac {1}{\lambda} - \lambda \right ) \phi_2 (x)}
{\sqrt {-2 \phi_2 (a) \frac {d \mu}{dE}}} = \frac {\phi_2 (a) \phi_1 (x) + i \sin (pa)  \phi_2 (x)}{\sqrt {-2 \phi_2 (a) 
\frac {d \mu}{dE}}} \label {Eqn11} 
\ee
where we observe that $ \frac 12  \left ( \frac {1}{\lambda} - \lambda \right ) = i \sin (pa)$. \ At $\lambda = \pm 1$ 
the solution $\varphi (x)$ follows from (\ref {Eqn11}) by means of a continuity argument.

Recalling that the non trivial solutions $\phi_{1,2}$ satisfy the following integral equation (see Lemma 1.7 \cite {T}):
\be
\phi_1 (x,E) &=& \cos (\sqrt E x ) + \frac {1}{\sqrt E} \int_0^x \sin \left [ \sqrt E (x-y) \right ] V(y) 
\phi_1 (y,E) dy  \label {Eqn12} \\
\phi_2 (x,E) &=& \frac {1}{\sqrt E} \sin (\sqrt E x ) + \frac {1}{\sqrt E} \int_0^x \sin \left [ \sqrt E (x-y) \right ] 
V(y) \phi_2 (y,E) dy \label {Eqn13}
\ee
from which follows the asymptotic behavior for large $E$: 
\bee
\phi_1 (x,E) &=& \cos (\sqrt E x) + O \left ( \frac {1}{\sqrt E} \right ) \, , 
\\
\phi_2 (x,E) &=& \frac {\sin (\sqrt E x)}{\sqrt E} + O \left ( \frac {1}{E} \right ) \, ,  
\eee
where the asymptotic behavior is uniform for any $x \in [0,a]$ and where the asymptotic behavior can be derived with respect to $x$ and 
$E$ term by term; e.g.:
\bee
\frac {\partial \phi_1 (x,E) }{\partial x} &=& - \sqrt E \sin (\sqrt E x) + 
O \left (1 \right )\, ,  
\\
\frac {\partial \phi_1 (x,E) }{\partial \sqrt E} &=& - x \sin (\sqrt E x) +  O \left ( \frac {1}{\sqrt E} \right ) \, ,  
\eee
and
\bee
\frac {\partial \phi_2 (x,E) }{\partial x} &=&\cos (\sqrt E x) + O \left ( \frac {1}{\sqrt E} \right ) \, , 
\\
\frac {\partial \phi_2 (x,E) }{\partial \sqrt E} &=&\frac {x\cos (\sqrt E x)}{\sqrt E}  +  O \left ( \frac {1}{E} 
\right ) \, . 
\eee

In order to have an asymptotic expression with more terms we iterate equations (\ref {Eqn12}) and (\ref {Eqn13}) obtaining the following  
asymptotic behavior for large $E$, which are uniform for any $x \in [0,a]$
\bee
\phi_1 (x) 
&=& \cos (\sqrt E x ) + \frac {1}{2\sqrt E} \sin(\sqrt E x) Q(x) + \frac {1}{4E} \left [ V(x) - V(0) \right ] \cos \left ( \sqrt {E} x 
\right ) + g_1 (x,E) \\ 
\phi_2 (x) 
&=& \frac {\sin (\sqrt E x )}{\sqrt E} - \frac {1}{2 E} \cos(\sqrt E x) Q(x) + \frac {1}{4E^{3/2}} \left [ V(x) - V(0) \right ] 
\sin \left ( \sqrt {E} x \right ) + g_2 (x,E)
\eee
where $Q(x) = \int_0^x V(y) dy $ and where 
\bee
g_1 (x,{E}) &=& - \frac {1}{4E} \int_0^x \cos \left [ \sqrt {E} (x-2y) \right ] V' (y) dy + \frac {1}{2E} \int_0^x V(z) \phi_1 (z,E) 
dz \times \\ 
&& \ \times \left [ \int_z^x \cos [ \sqrt {E} (x+z-2y) ] V(y) dy - \cos [\sqrt {E} (x-z) ] [Q(z)-Q(x) ] \right ] \\
g_2 (x,{E}) &=& - \frac {1}{4E^{3/2}} \int_0^x \sin \left [ \sqrt {E} (x-2y) \right ] V' (y) dy + \frac {1}{2E^{3/2}} \int_0^x V(z) 
\phi_2 (z,E) dz \times \\ 
&& \ \times \left [ \int_z^x \cos [ \sqrt {E} (x+z-2y) ] V(y) dy - \cos [\sqrt {E} (x-z) ] [Q(z)-Q(x) ] \right ] 
\eee
are such that
\bee
g_1 (x,E) \sim E^{-3/2} , \ \frac {\partial g_1 (x,E)}{\partial x} \sim E^{-1} \ \mbox { and } \ \frac {\partial g_1 (x,E)}{\partial E} 
\sim E^{-3/2} \\ 
g_2 (x,E) \sim E^{-2} , \ \frac {\partial g_2 (x,E)}{\partial x} \sim E^{-3/2} \ \mbox { and } \ \frac {\partial g_2 (x,E)}{\partial E} 
\sim E^{-2}
\eee
by integrating by part. \ In particular, recalling that $Q(a) =\int_0^a V(x) dx=0$, we have that
\bee
\mu (E) &=& \cos (\sqrt E a ) + \frac {\sin (\sqrt E a )}{2\sqrt E} \int_0^a V(y) dy + O(E^{-1}) \\ 
&=&  \cos (\sqrt E a )  + O(E^{-1}) \\
\phi_2 (a) &=& \frac {1}{\sqrt {E}} \sin (\sqrt E a ) + O(E^{-3/2}) \\ 
\frac {d \mu }{dE} &=& - \frac {a}{2\sqrt E} \sin (\sqrt E a) + O(E^{-2}) \\ 
\eee
and thus $p = \sqrt E + O (E^{-1}) $, $\sin (pa) = \sin (\sqrt E a ) + O(E^{-1}) $ and 
\bee
\varphi (x,E) &=& \sqrt {- \frac {\phi_2 (a,E)}{2 \frac {d\mu}{dE}}} \phi_1 (x,E) + 
\frac {\sin (pa)}{\sqrt {- 2 \phi_2 (a,E) \frac {d\mu}{dE}}} \phi_2 (x,E) \\ 
&=& \left [ \sqrt {\frac {1}{a}} + O(E^{-3/2} ) \right ] \left [ \phi_1 (x,E) + i \sqrt E \phi_2 (x,E) \right ] 
\eee
%&=& \left [ \sqrt {\frac {1}{a}} + O(E^{-3/2} ) \right ]  \left [ e^{i \sqrt E x} - 
%\frac {i}{2\sqrt E} e^{i\sqrt E x} Q(x) +  e^{i \sqrt E x} \frac {V(x)-V(0)}{4E} + O(E^{-1}) \right ] \nonumber \\ 
%&=&\sqrt {\frac {1}{a}} e^{i \sqrt E x} \left [ 1 +  \frac {V(x)-V(0)}{4E} \right ] - \frac {i}{2\sqrt {Ea}} e^{i\sqrt E x} Q(x) + O(E^{-1})  \nonumber 
%
proving the asymptotic behavior (\ref {Eqn12Bis}). 

We have only to discuss the case of non symmetric periodic potentials. \ In the case of non symmetric potential the expression ot the 
wavefunction $\varphi $ given by (\ref {Eqn11}) must be replaced by (see Appendix \ref {AppA})
\be
\varphi (x) =  \frac {\phi_2 (a) \phi_1 (x) - \left ( \phi_1 (a) - \lambda \right ) \phi_2 (x)}{\sqrt {-2 \phi_2 (a) \frac {d \mu}{dE}}}
\label {Eqn20}
\ee
and even in such a case the asymptotic arguments as above may be applied. \ The Lemma \ref {Lemma1Bis} is thus completely proved. 
\end {proof}

Now, we are ready to prove the estimate of $u(x,E)$ for any complex value $x$ belonging to the box
\bee
L = \left \{ x \in \C \ : \ \Re x \in [0,a], \ \Im x \in [-R,0] \right \}.
\eee
In fact, the solution of equation (\ref {Eqn3}) may be extended to complex values $x$ in the strip $|\Im x |\le R$ because the potential $V$ is an analityc function on such a strip.

\begin {theorem} \label {Lemma1}
The multisheeted Bloch function $\varphi (x,E) = e^{ipx} u(x,E)$ is such that 
\be
u (x,E) = \frac {1}{\sqrt a}  -i \frac {Q(x)}{2\sqrt {a E}} + O \left ( E^{-1} \right ) \, , \ \mbox { where } \ 
Q(x) =\int_{\gamma_{0,x}} V(y) dy \, , \label {F16}
\ee 
as $E \to \infty$ uniformly with respect to $x \in L$, and ${\gamma_{0,x}}$ is any complex path contained in the box $L$ and connecting the complex points $0$ and $x$. \ The asymptotic behavior can be derived with respect to $x$ and $E$ term by term; in particular, 
\bee
\frac {\partial u(x,E)}{\partial E} = i \frac {Q(x)}{4 E^{3/2} \sqrt a} + O (E^{-2}) 
\eee
and
\bee
\frac {\partial^2 u(x,E)}{\partial E^2} = -i \frac 38 \frac {Q(x)}{ E^{5/2} \sqrt a} + O (E^{-3}) \, . 
\eee
\end {theorem}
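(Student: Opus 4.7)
The plan is to bootstrap from Lemma \ref{Lemma1Bis}, which on the real interval $[0,a]$ gives
\[
\varphi(x,E)=C(E)\bigl[\phi_1(x,E)+i\sqrt E\,\phi_2(x,E)\bigr], \qquad C(E)=\tfrac{1}{\sqrt a}+O(E^{-3/2}),
\]
together with the asymptotic expansions of $\phi_1,\phi_2$ obtained in its proof. The first step is to extend those expansions to $x\in L$. Since $V$ is holomorphic on $|\Im x|\le R$, the Volterra equations (\ref{Eqn12})--(\ref{Eqn13}) can be integrated along any path from $0$ to $x$ in $L$, so $\phi_1(\cdot,E)$ and $\phi_2(\cdot,E)$ continue to holomorphic functions on $L$ and $Q(x)=\int_{\gamma_{0,x}}V(y)\,dy$ is well-defined and path-independent.

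The iteration and integration by parts from the proof of Lemma \ref{Lemma1Bis} go through verbatim in the complex setting, except that the natural Volterra envelope is now $e^{|\Im x|\sqrt E}$. One therefore obtains, uniformly on $L$,
\[
\phi_1(x,E)=\cos(\sqrt E\,x)+\tfrac{1}{2\sqrt E}\sin(\sqrt E\,x)\,Q(x)+R_1(x,E),
\]
\[
\phi_2(x,E)=\tfrac{1}{\sqrt E}\sin(\sqrt E\,x)-\tfrac{1}{2E}\cos(\sqrt E\,x)\,Q(x)+R_2(x,E),
\]
with $|R_1|\le C\,E^{-1}e^{|\Im x|\sqrt E}$ and $|R_2|\le C\,E^{-3/2}e^{|\Im x|\sqrt E}$. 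The key cancellation is that the $e^{-i\sqrt E x}$ branch disappears from the combination $\phi_1+i\sqrt E\,\phi_2$, which reduces to
\[
\phi_1(x,E)+i\sqrt E\,\phi_2(x,E)=e^{i\sqrt E\,x}\Bigl[1-i\tfrac{Q(x)}{2\sqrt E}\Bigr]+O(E^{-1})\,e^{|\Im x|\sqrt E}.
\]

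To finish, I divide $\varphi(x,E)=C(E)[\phi_1+i\sqrt E\,\phi_2]$ by $e^{ipx}$. Using $p=\sqrt E+O(E^{-1})$ gives $|e^{i(\sqrt E-p)x}|=1+O(E^{-1})$ on $L$, which exactly absorbs the growing prefactor $e^{|\Im x|\sqrt E}$ in the remainder and yields (\ref{F16}) after combining with $C(E)=1/\sqrt a+O(E^{-3/2})$. The derivative bounds follow by differentiating the explicit iterates under the integral sign and verifying, by the same integration by parts, that $\partial_E^m R_j=O(E^{-1-m})e^{|\Im x|\sqrt E}$; then the product rule together with $\partial_E p=\tfrac{1}{2\sqrt E}+O(E^{-5/2})$ delivers the stated expansions for $\partial_E u$ and $\partial_E^2 u$.

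The principal obstacle lies precisely in tracking those exponential envelopes: the remainders in $\phi_1$ and $\phi_2$ individually carry a factor $e^{|\Im x|\sqrt E}$, which is huge in the asymptotic regime. Only the exact cancellation producing the unimodular correction $|e^{i\sqrt E x}e^{-ipx}|=1+O(E^{-1})$ on $L$ renders the error in $u(x,E)$ genuinely of order $E^{-1}$. The same cancellation must survive differentiation in $E$, which is why the integration-by-parts control of the explicit remainders — rather than a naive Gronwall-type estimate — is really doing the work.
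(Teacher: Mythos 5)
Your proposal is correct and follows essentially the same route as the paper: the paper also bootstraps from Lemma \ref{Lemma1Bis}, but instead of expanding $\phi_1$ and $\phi_2$ separately it combines them into the single Volterra equation for $\phi=\phi_1+i\sqrt{E}\,\phi_2$, substitutes $\phi=e^{i\sqrt{E}x}v$ \emph{inside} that equation, and chooses the path $\gamma_{0,x}$ so that $\Im x\le\Im y$ for every $y\in\gamma_{0,x}$, which makes the kernel $\bigl(1-e^{-2i\sqrt{E}(x-y)}\bigr)V(y)/2i$ uniformly bounded and eliminates the exponential envelopes from the outset. The one point you should state explicitly is that your claimed envelope $e^{|\Im x|\sqrt{E}}$ for $R_1,R_2$ holds only for such a monotone path (for an arbitrary path in $L$ the Volterra iteration produces a strictly larger envelope); once that choice is made, your end-of-proof absorption of the envelope by $|e^{-ipx}|=e^{-p|\Im x|}$ is exactly equivalent to the paper's bounded-kernel device.
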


\begin {proof}
From (\ref {Eqn12Bis}) it follows that
\bee
\varphi (x,E) = C(E) \phi (x,E) ,\ \phi (x,E):= \phi_1 (x,E) + i \sqrt {E} \phi_2 (x,E)
\eee
where, recalling (\ref {Eqn12}) and (\ref {Eqn13}), it follows that 
\bee
\phi (x,E) = e^{i\sqrt {E} x} + \frac {1}{\sqrt {E}} \int_{\gamma_{0,x}} \sin \left [ \sqrt {E} (x-y) \right ] V(y) \phi (y,E) d y \, . 
\eee
If we choose the path ${\gamma_{0,x}}$ such that $\Im x \le \Im y$ for any $y \in {\gamma_{0,x}}$, and if we set
\bee
\phi (x,E) = e^{i\sqrt E x} v(x,E) 
\eee
then $v(x,E)$ satisfyies to the following integral equation 
\bee
v(x,E) 
&=& 1 + \frac {1}{\sqrt E} \int_{\gamma_{0,x}} \sin \left [ \sqrt {E} (x-y) \right ] e^{-i \sqrt E (x-y)} V(y) v (y,E) d y \\ 
&=& 1 + \frac {1}{\sqrt E} \int_{\gamma_{0,x}} \sin \left [ \sqrt {E} (x-y) \right ] e^{-i \sqrt E (x-y)} V(y) d y + r(x,E)
\eee
where
\bee
r(x,E) &=& \frac {1}{E} \int_{\gamma_{0,x}} \sin \left [ \sqrt {E} (x-y) \right ] e^{-i \sqrt E (x-y)} V(y) \times \\ && \ \ \times \int_{\gamma_{0,y}} \sin \left [ \sqrt {E} (y-z) \right ] e^{-i \sqrt E (y-z)} V(z) v(z,E) dz d y
\eee
and
\bee
\sin \left [ \sqrt {E} (x-y) \right ] e^{-i \sqrt E (x-y)} = \frac {1-e^{-2i \sqrt E (x-y)}}{2i}
\eee
is such that 
\bee
\left | e^{-2i \sqrt E (x-y)} \right | = e^{2 \sqrt E (\Im x- \Im y)} \le 1 \, , \ \mbox { for any } y\in {\gamma_{0,x}} \, .
\eee
Hence
\bee
v(x,E)=1 -i \frac {Q(x)}{2\sqrt E} + \frac {i}{2\sqrt E} \int_{\gamma_{0,x}} e^{-2i \sqrt E (x-y)} V(y) dy + r(x,E)
\eee
If we set 
\bee
\nu (E) = \sup_{x\in L } |v(x,E)|
\eee
then from the equation above it turns out that
\bee
\nu \le 1 + \frac {C_1}{\sqrt E} + \frac {C_2}{E} \nu
\eee
for some positive constants $C_1$ and $C_2$. \ From this fact, it follows that $\nu \le C$ for some positive constant $C$ uniformly with respect to $E$. \ Hence we can conclude that 
\bee
v (x,E) = 1 - i \frac {Q(x)}{2\sqrt E} + \asy (E^{-1}) 
\eee
in the limit of large $E$. \ Finally
\bee
u(x,E) = e^{-i p x} \varphi (x,E) = e^{-i px} C(E) \phi (x,E) = C(E) e^{-i p x} e^{i \sqrt E x} v (x,E) 
\eee
from which (\ref {F16}) follows for any $x\in L$. \ We close by underlining that, by construction, the asymptotic behavior can be derived term by term.
\end {proof}

\section {Stark-Wannier operator in the extended crystal momentum representation}

The CMR of the Stark-Wannier operator is usually defined on ${\mathcal H} = \otimes_{n=1}^{+\infty} L^2 ({\mathcal B},dk)$ as 
(see \cite {B})
\bee
\left [ U H_F U^{-1} \hat \psi \right ]_m (k) = i F \frac {\partial \hat \psi_m (k)}{\partial k} + E_m (k) \hat \psi_m (k) + 
F \sum_{n\in \N} X_{m,n} (k) \hat \psi_n (k) \, ,
\eee
where $U$ is defined by (\ref {Eqn1Ter}) and the interband coupling terms are defined as 
\be
X_{m,n} (k) = \frac {i}{a} \int_{0}^{a} \overline {u_m (x,k)} \frac {\partial u_n (x,k)}{\partial k} d x \, . \label {Eqn22}
\ee

This usual approach is not suitable in order to apply analytic distortion techniques; hence, as a preliminary step, by making use of 
the analytic properties of the multisheeted band and Bloch functions $E(p)$ and $\varphi (x,p)$, we  
define the Bloch decomposition formula and the CMR  in an equivalent form which will be useful in the sequel 
(it will be called \emph {extended} crystal momentum representation, or simply ECMR). \ 

The transformation $U$ can be rewritten as $\hat \psi = U_E \psi $ where  
\bee
\hat \psi (p) = \int_{\R} \overline {\varphi (x,p)} \psi (x) d x \,  , p \in \tilde \R \, . 
\eee
The inverse transformation is defined as  
\bee
\left ( U_E^{-1} \hat \psi \right )(x)= \frac {1}{2\pi} \int_{\tilde \R} \varphi (x,p) \hat \psi (p) dp  \, . 
\eee

\begin {lemma} \label {Lemma3}
Let $\psi (x)$ be a rapidly decreasing function: that is for any $\alpha , \beta \in \N$ there exists a positive constant 
$C=C_{\alpha , \beta}$ such that 
\bee
\left | x^\alpha \frac {d^r \psi (x)}{d x^r} \right | \le C \, , \ \forall x \in \R \, . 
\eee
Then $\hat \psi = U_E \psi$ is a rapidly decreasing function, too. \ And vice versa.
\end {lemma}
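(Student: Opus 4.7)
The strategy mimics the classical argument that the Fourier transform preserves the Schwartz class, with $\varphi(x,p)=e^{ipx}u(x,p)$ replacing the plane wave $e^{ipx}$ and the Bloch operator $H_B$ replacing $-\partial_x^2$. Two complementary estimates are needed: polynomial decay of $\hat\psi(p)$ as $|p|\to\infty$, gained from smoothness of $\psi$, and smoothness of $\hat\psi$ in $p$ with polynomial bounds on derivatives, gained from decay of $\psi$.

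For the polynomial decay I would use the eigenvalue equation $H_B\varphi(\cdot,p)=E(p)\varphi(\cdot,p)$ together with self-adjointness of $H_B$ and integration by parts (boundary terms vanishing at $\pm\infty$ by rapid decay of $\psi$) to write
\be
E(p)^s \hat\psi(p) = \int_\R \overline{\varphi(x,p)}\,(H_B^s\psi)(x)\,dx \label{EqnPlanA}
\ee
for every $s\in\N$. Since $V\in C^\infty$, $H_B^s\psi$ is again rapidly decreasing, hence in $L^1(\R)$, and combined with the uniform bound $\sup_{x\in[0,a],\,p\in\tilde\R}|u(x,p)|<\infty$ (from Theorem \ref{Lemma1} for $|p|$ large and from continuity on each connected component of $\tilde\R$ for $|p|$ bounded) this yields $|E(p)^s\hat\psi(p)|\le C_s$. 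Since $E(p)\sim p^2$ at infinity by Lemma \ref{Lemma1Bis}, one obtains $|p^\alpha\hat\psi(p)|\le C_\alpha$ for every $\alpha\in\N$. For smoothness in $p$ one differentiates under the integral sign,
\be
\frac{d^r\hat\psi}{dp^r}(p) = \sum_{j=0}^r\binom{r}{j}\int_\R (-ix)^{r-j}\,e^{-ipx}\,\overline{\partial_p^j u(x,p)}\,\psi(x)\,dx, \label{EqnPlanB}
\ee
and applies the previous step with $\psi$ replaced by the still-Schwartz function $x^{r-j}\psi(x)$. The converse follows symmetrically from the inverse formula $\psi(x)=\frac{1}{2\pi}\int_{\tilde\R}\varphi(x,p)\hat\psi(p)\,dp$, interchanging the roles of $x$ and $p$.

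The main obstacle is the uniform bound on the $p$-derivatives $\partial_p^j u(x,p)$ over $[0,a]\times\tilde\R$. Theorem \ref{Lemma1} settles large $|p|$, its asymptotic expansion being stated as differentiable term by term; on bounded subsets one uses that $E(p)$ maps each connected component of $\tilde\R$ smoothly into a spectral band, with $dE/dp$ vanishing at the Kohn branch points $p=\pm nb/2$, so that the chain rule $\partial_p=(dE/dp)\partial_E$ introduces no singularity. In the converse direction the integration by parts in $p$ must be performed piecewise on each component of $\tilde\R$, but the boundary contributions at the branch points cancel in adjacent pairs by virtue of the gluing identities (\ref{Eqn2Bis}) built into the multisheeted Bloch and band functions.
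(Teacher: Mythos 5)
Your proposal is correct, and its smoothness half (differentiation under the integral sign in $p$, with uniformity supplied by Theorem \ref{Lemma1}) coincides with the paper's argument. The decay half, however, takes a genuinely different route. The paper stays entirely with the plane-wave factor: it writes $p^\alpha e^{-ipx}$ as a constant multiple of $\partial_x^\alpha e^{-ipx}$ and integrates by parts in $x$, obtaining $p^\alpha\hat\psi(p)=(-i)^\alpha\int_\R e^{-ipx}\,\partial_x^\alpha\left[\overline{u(x,p)}\,\psi(x)\right]dx$, so boundedness rests on uniform-in-$p$ control of the $x$-derivatives of $u(x,p)$, which Theorem \ref{Lemma1} supplies since its expansion is differentiable term by term in $x$. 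You instead commute powers of $H_B$ through the pairing via $H_B\varphi(\cdot,p)=E(p)\varphi(\cdot,p)$ and $E(p)\sim p^2$. Your version needs only the sup bound on $|u|$ itself (no $x$-derivatives of $u$), at the price of invoking the eigenvalue equation, checking that the boundary terms at $x=\pm\infty$ in the repeated integration by parts vanish, and using that $H_B^s\psi$ is again rapidly decreasing (true since $V$ is smooth). Both are legitimate; the paper's is the more elementary and is the direct transcription of the classical Fourier-transform computation. One small caution on your wording: the Kohn branch points sit at $\pm\frac12 nb\pm i\kappa_n$, off the real axis when the gap is open, so what is actually needed on $\tilde\R$ is that $u(x,p)$ and its $p$-derivatives have finite one-sided limits at the excluded real points $\pm\frac12 nb$, glued together by (\ref{Eqn2Bis}); this same gluing is what, as you correctly note, cancels the boundary contributions in the converse direction --- a point the paper leaves entirely implicit with its ``the vice versa follows.''
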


\begin {proof}
Let 
\bee
\hat \psi (p) = \int_{\R} e^{-i p x} \overline {u(x,p)} \psi (x) d x \, . 
\eee
Then, by integrating by parts 
\bee
p^\alpha \hat \psi (p) 
= \int_{\R} p^\alpha e^{-ipx} \overline {u(x,p)}  \psi (x) dx  
= (-i)^\alpha \int_{\R} e^{-ipx} \frac {\partial^\alpha}{\partial x^\alpha} \left [ \overline {u(x,p)} \psi (x) \right ] dx 
\eee
is bounded because $\overline {u(x,p)}$ is  s smooth periodic function with respect to $x$ and $\psi (x)$ is a rapidly decreasing 
function. \ For the same reason it follows that 
\bee
\frac {d^\beta \hat \psi (p)}{dp^\beta} = \int_{\R} \frac {\partial^\beta \left ( e^{-ipx} \overline {u(x,p)}\right )}{\partial p^\beta} 
\psi (x) dx 
\eee
converges. \ Finally, the estimates are uniformly with respect to $p$ because of Theorem \ref {Lemma1}. \ Similarly, the vice versa follows.
\end {proof}

\begin {theorem}\label {Theorem2}
If we define 
\bee
\tilde H_F = U_E H_F U_E^{-1} 
\eee
then it acts as
\be
\left [ \tilde H_F \hat \psi \right ] (p) &=& E (p) \hat \psi (p) + i F \frac {\partial \hat \psi (p)}{\partial p} +  
F \left ( X \hat \psi \right ) (p ) \, , \ p \in {\tilde \R} \, , \label {Eqn23} 
\ee
where $X $ represents the interband coupling term:
\bee
\left ( X \hat \psi \right ) (p) &=& \sum_{j \in \Z} C_j (p) \hat \psi (p-jb)  \, . 
\eee
where
\bee
C_j (p) = \frac ia \int_0^a e^{-i b j x} \bar u (x,p) \frac {\partial u (x,p-bj)}{\partial p} d x \, . 
\eee
\end {theorem}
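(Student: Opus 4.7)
The plan is to apply $H_F=H_B+Fx$ to $\psi=U_E^{-1}\hat\psi$ for $\hat\psi$ in the dense class of rapidly decreasing functions supplied by Lemma \ref{Lemma3}, take the $U_E$ transform of the result, and compare with (\ref{Eqn23}); once the identity is verified on this dense subspace it extends to $\mathcal{H}_E$ by density. The Bloch part is immediate: since $\varphi(\cdot,p)$ solves $H_B\varphi(\cdot,p)=E(p)\varphi(\cdot,p)$ as a generalized eigenvalue equation on $\R$, the same completeness/orthogonality argument that diagonalises $H_B$ in the ordinary CMR gives $U_E H_B U_E^{-1}\hat\psi(p)=E(p)\hat\psi(p)$.

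The Stark term is handled through the elementary identity
\begin{eqnarray*}
x\varphi(x,p) = -i\,\frac{\partial \varphi(x,p)}{\partial p} + i\,e^{ipx}\,\frac{\partial u(x,p)}{\partial p},
\end{eqnarray*}
which follows by differentiating $\varphi(x,p)=e^{ipx}u(x,p)$ in $p$. Substituting the representation $\psi(x)=(2\pi)^{-1}\int_{\tilde{\R}} \varphi(x,p)\hat\psi(p)\,dp$ into the matrix element $\int_{\R}\overline{\varphi(x,p')}\,x\psi(x)\,dx$ splits the Stark contribution into two pieces. In the first piece I would integrate by parts in $p$ so that the derivative falls on $\hat\psi$, and then invoke the completeness relation $(2\pi)^{-1}\int_{\R}\overline{\varphi(x,p')}\varphi(x,p)\,dx=\delta(p-p')$ for the multisheeted Bloch system; this produces the drift term $iF\,\partial_p \hat\psi(p')$.

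In the second piece the integrand carries the factor $\overline{u(x,p')}\,\partial_p u(x,p)$, which is smooth and $a$-periodic in $x$. Expanding it in Fourier series $\sum_{j\in\Z} d_j(p,p')\,e^{ibjx}$, the remaining $x$-integration against $e^{i(p-p')x}$ collapses to a comb $\sum_j 2\pi\,\delta(p-p'+bj)$, and evaluating the $p$-integral on the shifts $p=p'-bj$ yields precisely $F\sum_{j\in\Z} C_j(p')\hat\psi(p'-bj)$ with $C_j(p')=(i/a)\int_0^a e^{-ibjx}\overline{u(x,p')}\,\partial_p u(x,p'-bj)\,dx$, as claimed.

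The hard part will be the justification of these formal manipulations on the multisheeted contour $\tilde{\R}$. The integration by parts in $p$ has to absorb the jumps at the cut endpoints $\pm \tfrac12 Nb$: the boundary contributions from the two sides of each cut must cancel, which is exactly what the gluing relations (\ref{Eqn2Bis}) for $\varphi(x,\cdot)$, together with the matching conditions built into the definition of $\mathcal{H}_E$, guarantee. Convergence of the sum over $j$ must also be established; this follows from Theorem \ref{Lemma1}, which gives $\partial_p u(x,p)=O(E(p)^{-1})$ with term-by-term $x$-differentiability, so that repeated integration by parts in $x$ in the Fourier integral defining $C_j(p)$ yields rapid decay of $|C_j(p)|$ in $|j|$, sufficient to make the sum absolutely convergent on the dense class of Lemma \ref{Lemma3} and hence, by density, on all of $\mathcal{H}_E$.
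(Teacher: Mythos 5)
Your plan is correct and follows essentially the same route as the paper: test on the dense class of rapidly decreasing functions from Lemma \ref{Lemma3}, integrate by parts in $p$ with the boundary contributions at the cut endpoints cancelling via the gluing relations (\ref{Eqn2Bis}), expand the periodic factor $\overline{u(x,p)}\,\partial_p u(x,p')$ in a Fourier series to produce the delta comb and hence the coefficients $C_j$, and justify the interchange of sum and integration via decay of $C_j(p)$ in $j$ and $p$ obtained by integration by parts in $x$ together with the asymptotics of Theorem \ref{Lemma1} (this is the content of the paper's Lemma \ref{Lemma4}). No substantive difference from the paper's argument.
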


\begin {proof}
Indeed, let $\hat \psi$  be a test function (e.g. $\hat \psi$ is a rapidly decreasing function)
\bee
\left [ H_F U_E^{-1} \hat \psi \right ] (x) = \frac {1}{2\pi} \int_{\tilde \R} H_F \varphi (x,p) \hat \psi (p) dp 
\eee
where 
\bee
H_F \varphi (x,p) = H_B \varphi (x,p) + Fx \varphi (x,p) = E(p) \varphi (x,p) + F x \varphi (x,p) 
\eee
and where 
\bee
&& \int_{\tilde \R} x \varphi (x,p) \hat \psi (p) dp = \int_{\tilde \R} x e^{ipx} u (x,p) \hat \psi (p) dp 
= -i \int_{\tilde \R} \frac {\partial e^{ipx}}{\partial p} u (x,p) \hat \psi (p) dp \\ 
&& \ = -i \left [\varphi (x,p) \hat \psi (p) \right ]_{\partial \tilde \R} + i \int_{\tilde \R} e^{ipx} \frac {\partial u (x,p) }
{\partial p} \hat \psi (p) dp + i \int_{\tilde \R} \varphi (x,p) \frac {\partial \hat \psi (p) }{\partial p} dp 
\eee
because $\varphi (x,p)$ and $\hat \psi (p)$ have the same values, together with their derivatives, at the boundary points of 
$\tilde \R$ (see Eq. (\ref {Eqn2Bis})). \ Hence (\ref {Eqn23}) follows where 
\be
\left ( X \hat \psi \right ) (p)  &=& \frac {i}{2\pi} \int_{\R} \overline {\varphi (x,p)} \int_{\tilde \R} e^{ip' x}  
\frac {\partial u (x,p')}{\partial p} \hat \psi (p') d p' d x \, . \label {Eqn24}
\ee 
In particular, a formal calculation lead us to the following result
\be
\left ( X \hat \psi \right )(p) &=& \frac {i}{2\pi} \int_{\tilde \R} \left [ \int_{\R} \overline {\varphi (x,p)}  e^{ip' x} 
\frac {\partial u (x,p')}{\partial p}  d x \right ] \hat \psi (p') d p' \label {Eqn25}\\ 
& =& \frac {i}{2\pi} \int_{\tilde \R} \left [ \int_{\R} \overline {u (x,p)}  e^{i(p'-p) x} \frac {\partial u (x,p')}{\partial p}  
d x \right ] \hat \psi (p') d p' \nonumber \\ 
&=& \frac {i}{2\pi} \int_{\tilde \R} \left [ \int_{\R} \sum_{j\in \Z} \alpha_j (p,p') e^{i \left ( p'-p+b j \right ) x} dx \right ] 
\hat \psi (p') dp' \nonumber \\
&=& i \sum_{j\in \Z} \int_{\tilde \R} \alpha_j (p,p') \hat \psi (p') \delta (p'-p+bj) d p' \label {Eqn26} \\ 
& =& \sum_{j\in \Z} C_j (p) \hat \psi (p-jb) \nonumber
\ee
because $\overline {u (x,p)}  \frac {\partial u (x,p')}{\partial p}$ is a periodic function with respect to $x$ with period $a$ and 
where we set
\bee
\alpha_j (p,p') = \frac 1a \int_0^a e^{-i  b j x } \overline {u(x,p)} \frac {\partial u (x,p')}{\partial p} dx , \ \ b = \frac {2\pi}{a}\, , 
\eee
and
\bee
C_j (p) :=\alpha_j (p,p-jb)= \frac ia \int_0^a e^{-ibjx} \overline {u (x,p)} \frac {\partial u (x,p-bj)}{\partial p} d x \,  . 
\eee
In fact, we only have to justify the exchange of integration from (\ref {Eqn24}) to (\ref {Eqn25}) and the exchange of the 
operation of sum, with respect to $j$, and integration from (\ref {Eqn25}) to (\ref {Eqn26}). \ To this end we make 
use of the fact that $\hat \psi$ is a rapidly decreasing function, from a standard mollifier argument and from the following technical 
result concerning the deceasing behavior of $C_j (p)$ with respect of $j$ and $p$.

\begin {lemma}\label {Lemma4}
Let the periodic potential regular enough, that is $V \in C^r$ for some $r>0$, then there exists a positive constant $C:=C_r$ 
independent of $j$ and $p$ such that
\bee
\left | C_j (p) \right | \le C [|p|+1]^{-2} [|j|+1]^{-r} \, , \ \forall p \in \tilde \R \ \mbox { and } \ j\in \Z \, . 
\eee
\end {lemma}

\begin {proof}
First of all we remark that from the results obtained in Theorem \ref {Lemma1} it follows that (let us denote $u(x,E) = u \left [ x , E(p) 
\right ]$ instead of $u(x,p)$ where this does not cause misunderstanding)
\bee
C_j (p) 
&=& \frac ia \int_0^a e^{-i b j x } \overline {u (x,p)} \frac {\partial u(x,p-bj)}{\partial p} d x \\ 
&=& \frac ia \int_0^a e^{-i b j x } \left. \overline {u (x,E)} \right |_{E=E(p)} \left. \frac {\partial u(x,E)}{\partial E} 
\right |_{E=E(p-bj)} \frac {d E(p-bj)}{dp} d x 
\eee
is uniformly bounded for any $j$ and any $p$:
\bee
|C_j (p)| \le C \, ,\  \forall j \in \Z, \ \forall p \in \tilde \R \, , 
\eee
for some positive constant $C>0$. \ Furthermore, equation (\ref {Eqn1}) can be rewritten as 
\bee
\delta_j^0 = \frac 1a \int_0^a \overline {\varphi (x,p)} \varphi (x, p - bj) dx = \frac 1a \int_0^a \overline {u (x,p)} u (x, p - bj) 
e^{-ibjx} dx
\eee
then it follows that 
\be
C_j (p) 
= - \frac ia \int_0^a e^{-ibjx} u (x,p-bj) \frac {\partial \overline {u (x,p)}}{\partial p} d x = \overline {C_{-j} (p-bj)} \, . \label {Eqn27}
\ee
Then, we estimate the terms $C_j (p+bj)$ by making use of the asymptotic estimate for $j$ and $p$ large. \ To this end let $E=E(p)$ 
and $E_j =E(p+bj)$; then, from Theorem \ref {Lemma1}, recalling that $E(p) \sim p^2$ and integrating by parts $r$ times it follows that
\bee
&& C_j (p+bj) = \frac {i}{a} \int_0^a e^{-ibjx} \overline {u (x, E_j)} \frac {\partial u(x,E)}{\partial E } \frac {dE}{dp} dx \\ 
&& \ \sim  \frac {2ip}{a^2} \int_0^a e^{-ijbx} \left [ 1 -i \frac {Q(x)}{2\sqrt {E_j}} \right ] \left [ 1 + O(E_j^{-1} ) \right ]  
i \frac {Q(x)}{4 E^{3/2}}  \left [ 1 + O(E^{-1/2}) \right ]2 \sqrt {E} dx \\ 
&& \ \sim O (p^{-2} j^{-r}) \left [ 1 + O(E_j^{-1} ) \right ] \, \left [ 1 + O(E^{-1/2}) \right ]
\eee
provided that $V \in C^{r}$. 
\end {proof}

Thus Theorem \ref {Theorem2} is completely proved since the set of rapidly decreasing functions is a dense set in $L^2 (\tilde R , dp)$.
\end {proof}

\begin {remark} \label {Remark1}
From (\ref {Eqn27}) it turns out
\bee
C_0 (p) = \overline {C_0(p)}
\eee
is real valued. \ In particular, if the periodic potential is a symmetric function, then $\overline {u(x,p)} = u(-x,p)$ and we have that 
\bee
C_0 (p) &=& \frac {i}{a} \int_{-a/2}^{+a/2} u(-x,p) \frac {\partial u (x,p)}{\partial p} dx = \frac {i}{a} \int_{-a/2}^{+a/2} u(x,p) \frac {\partial u (-x,p)}{\partial p} dx \\ 
&=& \frac {i}{a} \int_{-a/2}^{+a/2} u(x,p) \frac {\partial \overline {u (x,p)}}{\partial p} dx = -\frac {i}{a} \int_{-a/2}^{+a/2} \overline {u(x,p)} \frac {\partial {u (x,p)}}{\partial p} dx = - C_0 (p) 
\eee
Hence $C_0 (p) \equiv 0$ is the case of symmetric periodic potentials.
\end {remark}

Now, we have the following property

\begin {theorem} \label {Theorem3}
Let $V\in C^2$. \ Then the linear operator $X$ from $L^2 (\tilde \R, dp)$ to $L^2 (\tilde \R , dp)$ is bounded.
\end {theorem}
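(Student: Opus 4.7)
The plan is to reduce the boundedness of $X$ to the pointwise decay estimate of Lemma \ref{Lemma4}, essentially by Minkowski's inequality applied to the expansion $(X\hat\psi)(p) = \sum_{j\in\Z} C_j(p)\hat\psi(p-jb)$. The hard analytical work — extracting decay both in $p$ and in $j$ from the asymptotics of the Bloch functions — has already been carried out in Theorem \ref{Lemma1} and packaged as Lemma \ref{Lemma4}, so what remains is essentially a summation argument.

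More precisely, I would proceed as follows. Setting $r=2$ (which is allowed since $V\in C^2$), Lemma \ref{Lemma4} yields
\begin{eqnarray*}
\sup_{p\in\tilde\R} |C_j(p)| \le C (|j|+1)^{-2},\qquad j\in\Z,
\end{eqnarray*}
where $C$ is independent of $j$. For each $j\in\Z$ introduce the operator $T_j\hat\psi(p):=C_j(p)\hat\psi(p-jb)$. The translation $\hat\psi\mapsto \hat\psi(\cdot-jb)$ preserves $\tilde\R$ (the set of branch points $\pm\tfrac12 nb\pm i\kappa_n$ is invariant under horizontal shifts by $b$) and is an isometry on $L^2(\tilde\R,dp)$, while multiplication by the bounded function $C_j(p)$ is trivially bounded. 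Hence $\|T_j\|_{L^2\to L^2}\le \sup_p|C_j(p)|\le C(|j|+1)^{-2}$.

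Next I would apply Minkowski's inequality to the finite partial sums $X_M=\sum_{|j|\le M} T_j$: for each $M$,
\begin{eqnarray*}
\|X_M\hat\psi\|_{L^2(\tilde\R)} \le \sum_{|j|\le M} \|T_j\hat\psi\|_{L^2(\tilde\R)} \le \Bigl(C\sum_{|j|\le M}(|j|+1)^{-2}\Bigr)\|\hat\psi\|_{L^2(\tilde\R)}.
\end{eqnarray*}
Since $\sum_{j\in\Z}(|j|+1)^{-2}<\infty$, the right-hand side is uniformly bounded in $M$, and by dominated convergence (using the pointwise decay of $C_j(p)$ in $j$ on any compact set of $p$, together with $\hat\psi\in L^2$) $X_M\hat\psi\to X\hat\psi$ in $L^2$. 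This gives $\|X\|_{L^2\to L^2}\le C\sum_{j\in\Z}(|j|+1)^{-2}<\infty$, which is the desired bound. Equivalently, one could apply Schur's test to the ``kernel'' $K(p,p')=\sum_j C_j(p)\delta(p'-p+jb)$ and verify $\sup_p\int|K(p,p')|dp'$ and $\sup_{p'}\int|K(p,p')|dp$ are both finite, using the same estimate.

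The main obstacle, in this framing, is really conceptual rather than computational: one must check that the two operations involved — the translation by $jb$ and the restriction to $\tilde\R$ — are compatible, so that every $T_j$ is genuinely a bounded operator on $L^2(\tilde\R,dp)$ and the series converges in operator norm. All the substantive analysis (the $(|j|+1)^{-r}$ decay, which ultimately stems from integrating by parts $r$ times against $e^{-ibjx}$ in the integral defining $C_j$, and the $(|p|+1)^{-2}$ decay, which comes from the large-$E$ asymptotics $\partial_E u\sim Q(x)/E^{3/2}$ established in Theorem \ref{Lemma1}) has already been absorbed into Lemma \ref{Lemma4}, so the remaining argument is essentially summing a $\zeta(2)$-type series.
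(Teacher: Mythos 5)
Your proof is correct, and it reaches the same destination as the paper by the same essential route (everything is reduced to Lemma \ref{Lemma4} plus the summability of $\sum_j(1+|j|)^{-2}$), but the bookkeeping is organized differently. The paper estimates $\|X\hat\psi\|^2$ directly: it rewrites $\int_{\tilde\R}\bigl|\sum_j C_j(p)\hat\psi(p-bj)\bigr|^2dp$ as $\int_{\tilde\R}\bigl|\sum_j C_j(p+bj)\hat\psi(p)\bigr|^2dp$ (a step that is literally an inequality obtained via Minkowski's integral inequality rather than the equality displayed there), then bounds $\sup_p\bigl[\sum_j|C_j(p+bj)|\bigr]^2$ by splitting the sum into the regime where both $|p+bj|>M$ and $|j|>M$ (where the full decay $[|p+bj|+1]^{-2}|j|^{-r}$ of Lemma \ref{Lemma4} is invoked) and its complement (where only the uniform bound $|C_j|\le C$ is used, contributing a factor $M^2$). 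Your version — writing $X=\sum_j T_j$ with $T_j$ a bounded multiplication composed with the unitary translation by $jb$, so that $\|T_j\|\le\sup_p|C_j(p)|\le C(1+|j|)^{-2}$, and summing the operator norms — is cleaner: it needs only the $j$-decay from Lemma \ref{Lemma4} (the $[|p|+1]^{-2}$ factor and the parameter $M$ play no role), it gives absolute convergence of the series defining $X$ in operator norm for free, and it avoids the questionable change of variables inside the squared modulus. The one point you rightly flag, that translation by $jb$ is an isometry of $L^2(\tilde\R,dp)$, is immediate because $\R\setminus\tilde\R$ is a discrete (hence null) set, regardless of whether the shifted branch points line up with branch points; so your "main obstacle" dissolves. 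The only cosmetic caveat is that Lemma \ref{Lemma4} requires $V\in C^r$ with $r\ge 2$ for the $(1+|j|)^{-r}$ decay to be summable, which is exactly the hypothesis $V\in C^2$ of the theorem, as you note.
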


\begin {proof}

Let us fix a real and positive number $M>0$ large enough and let us compute the norm of $X \hat \psi$:
\bee
\| X\hat \psi \|^2_{L^2 (\tilde R, dp)} &=& \int_{\tilde \R} \left | \sum_{j\in \Z} C_j (p) \hat \psi (p-bj) \right |^2 dp 
=  \int_{\tilde \R} \left | \sum_{j\in \Z} C_j (p+bj) \hat \psi (p) \right |^2 dp  \\  
&\le & \int_{\tilde \R} \left [ \sum_{j\in \Z} |C_j (p+bj)|  \right ]^2 |\hat \psi (p)|^2 dp \\ 
& \le & 2 \int_{\tilde \R} \left [ \sum_{j^\star} |C_j (p+bj)|  \right ]^2 |\hat \psi (p)|^2 dp 
+ 2 \int_{\tilde \R} \left [ \sum_{j^{\star \star}} |C_j (p+bj)|  \right ]^2 |\hat \psi (p)|^2 dp 
\eee
where $\sum_{j^\star}$ is the sum for all $j\in \Z$ such that $|p+bj|>M$ and $|j|>M$; $\sum_{j^{\star \star}}$ is the sum 
for all $j\in \Z$ such that $|p+bj|\le M$ or $|j| \le M$. \ The second integral is simply estimated as follow
\bee
\int_{\tilde \R} \left [ \sum_{j^{\star \star}} |C_j (p+bj)|  \right ]^2 |\hat \psi (p)|^2 dp \le C^2 M^2 \| 
\hat \psi \|^2_{L^2 (\tilde R, dp)} \, . 
\eee
since from Lemma \ref {Lemma4} it follows that $|C_j (p+bj)| \le C$ for some positive constant $C$. \ Concerning the first 
integral we estimate the terms $C_j (p+bj)$ by making use again of Lemma \ref {Lemma4} obtaining that 
\bee
\int_{\tilde \R} \left [ \sum_{j^{\star }} |C_j (p+bj)| \right ]^2 |\hat \psi (p)|^2 dp \ &\le & \ \int_{\tilde \R} \left [ \sum_{j^{\star}} C [|p+j|^2+1]^{-2} |j|^{-r} \right ]^2 |\hat \psi (p)|^2 dp \\ 
& \le & C \| \hat \psi \|_{L^2 (\tilde R, dp)}^2
\eee
provided that $r\ge 2$. \ Thus, the boundedness of the linear operator $X$ follows. 
\end {proof}

\begin {remark} \label {Remark2}
In fact, the boundedness of the interband operator has been previously proved for potentials such that the associated Bloch operator 
has spectrum with only a finite number of open gaps \cite {GMS} or, in the opposite situation, for potentials such that all 
the gaps are open and with width bigger than a positive constant $C$, for some $C>0$ fixed (see \cite {MS} in the case of periodic 
potential given by a periodic sequence of $\delta'$ distributions). \ With Theorem \ref {Theorem3} we eventually fill that gap between 
these two opposite situations.  
\end {remark}

\section {Analytic distortion of the Stark-Wannier operator in the ECMR}  

Since translation in the $x-$space corresponds, by means of the Fourier transformation, to multiply by and 
exponential function in the momentum representation then we adopt the following strategy: we consider the Stark-Wannier 
operator $\tilde H_F$ in the ECMR  and its analytic deformation consists by multiplying the vector $\hat \psi$ by $e^{i\theta p}$, 
for $p $ outside a given interval. 

Let $N \ge 1$ be fixed and such that the gap between the $N$-th band and the $(N+1)$-th band is not empty: $E^b_{N+1} - E^t_N >0$. \ Let 
$\Gamma_1 = \left ( - \frac 12 N b , + \frac 12 N b \right ) \cap \tilde \R$ and $\Gamma_c = \tilde \R \setminus \Gamma_1$. \ Then we define 

\begin {definition} \label {Def1}
Let ${\mathcal H}_E$ be the space of vectors $\hat \psi \in L^2 (\tilde \R)$ satisfying the periodic boundary conditions at 
$\pm \frac 12 N b$; that is 
\be
\frac {d^r \hat \psi \left ( - \frac 12 N b \pm 0 \right )}{d p^r} = \frac {d^r \hat \psi \left ( + \frac 12 N 
b \mp 0 \right )}{d p^r} \, , \ \forall r \in \N \, . \label {22Bis}
\ee
Hence, ${\mathcal H}_E$ can be identified with $L^2 (\Gamma_1 )\otimes L^2 (\Gamma_c)$ and $\hat \psi$ with a couple of vectors 
$(\hat \psi_1 , \hat \psi_c)$ such that $\hat \psi_1 \left ( - \frac 12 N b +0 \right ) = \hat \psi_1 \left ( + \frac 12 N b -0 \right ) $ 
and $\hat \psi_c \left ( - \frac 12 N b -0 \right ) = \hat \psi_c \left ( + \frac 12 N b +0 \right ) $ together with their 
derivatives. \ The vector $(\hat \psi_1 , \hat \psi_c) = U_E \psi $ will be denoted the EMCR of $\psi$ and the unitary transformation 
from $L^2(\R , dx)$ to ${\mathcal H}_E$ acts as 
\bee
\hat \psi_1 (p) = \int_{\R} \overline {\varphi (x,p)} \psi (x) dx \, , p \in \Gamma_1 \, , \ \hat \psi_c (p) = \int_{\R} 
\overline {\varphi (x,p)} \psi (x) dx \, , p \in \Gamma_c
\eee
with inverse
\bee
\psi (x) = \frac {1}{2\pi} \int_{\Gamma_1} \varphi (x,p) \hat \psi_1 (p) d p + 
\frac {1}{2\pi} \int_{\Gamma_c}  \varphi (x,p) \hat \psi_c (p) d p \, . 
\eee
\end {definition}

We define now the analytic distortion.

\begin {definition} \label {Def2}
Let $\theta \in \C$ be fixed and such that $-R \le \Im \theta \le 0$, we define the analytic deformation  
\bee
\hat \psi^\theta = (\hat \psi_1^\theta , \hat \psi_c^\theta )= \tilde {\mathcal U}^\theta \hat \psi = U_E^\theta \psi \, , \ 
U_E^\theta :=  {\mathcal U}^\theta U_E
\eee
as follows 
\bee
\left \{ 
\begin {array}{l} 
\hat \psi_1^\theta (p) = \hat \psi_1 (p) \\ 
 \hat \psi_c^\theta (p) = e^{ip\theta} \hat \psi_c (p) 
\end {array} \right. \, , 
\eee
where $\hat \psi_1$ and $\hat \psi_c$ are defined in Definition \ref {Def1}, with inverse
\bee
\psi (x) &=& \frac {1}{2\pi} \int_{\Gamma_1} \varphi (x,p) \hat \psi^\theta_1 (p) d p + 
\frac {1}{2\pi} \int_{\Gamma_c} e^{-i p \theta} \varphi (x,p) \hat \psi^\theta_c (p) d p \, , 
\eee
$U_E^\theta$ is an unitary transformation for real $\theta$.

We define the analytic distortion of the Stark-Wannier operator in the ECMR as follows
\be
\tilde H_F^\theta =  U^\theta_E H_F \left ( U^\theta_E \right )^{-1} \, . \label {Eqn29}
\ee
\end {definition}

Now, we are going to give an explicit expression of $\tilde H_F^\theta$. \ Let $\hat \psi^\theta$ be a test function such that 
$e^{-ip\theta} \hat \psi_c^\theta (p)$ is a rapidly decreasing function (where the set of such functions is dense in ${\mathcal H}_E$), 
then 
\bee
\left [H_F \left ( U^\theta_E \right )^{-1} \hat \psi^\theta \right ] (x)&=& \frac {1}{2\pi} \int_{\Gamma_1} \left [ E (p) \varphi (x,p) + 
F x \varphi (x,p) \right ] \hat \psi_1^\theta (p) dp  +  \\ 
& + & \frac {1}{2\pi} \int_{\Gamma_c} \left [ E (p) \varphi (x,p) + F x \varphi (x,p) \right ] e^{-ip\theta} \hat \psi^\theta_c (p) dp \, . 
\eee
In particular, we have that 
\bee
&& \int_{\Gamma_c} x \varphi (x,p)  e^{-ip\theta}  \hat \psi_c^\theta (p) dp =  \int_{\Gamma_c} x e^{i p x} u (x,p) e^{-ip\theta} 
\hat \psi_c^\theta (p) dp = \\ 
&& \  = \int_{\Gamma_c} \frac {\partial  \left ( -i e^{i p x} \right )}{\partial p}u (x,p) e^{-ip\theta} \hat \psi_c^\theta (p) dp \\ 
&& \  =  \left [ -i \varphi (x,p)  e^{-ip\theta} \hat \psi_c^\theta (p) \right ]_{\partial \Gamma_c} + i  
\int_{\Gamma_c}  e^{i p x}  \left [ u (x,p) \frac {\partial e^{-i p \theta} \hat \psi_c^\theta (p)}{\partial p}  + 
\frac {\partial u (x,p)}{\partial p} e^{-i p \theta} \hat \psi_c^\theta (p) \right ] dp \\ 
&& \ \ =   i  \int_{\Gamma_c} \varphi (x,p) \frac {\partial e^{-i p \theta} \hat \psi_c^\theta (p)}{\partial p}  dp + 
i  \int_{\Gamma_c}  e^{i p x}  \frac {\partial u (x,p)}{\partial p} e^{-i p \theta} \hat \psi_c^\theta (p) d p \\ 
&& \ \ = i  \int_{\Gamma_c} \left [ - i \theta \hat \psi_c^\theta (p) + \frac {\partial \hat \psi_c^\theta (p)}{\partial p} \right ]  
e^{-i\theta p}\varphi (x,p) dp + i  \int_{\Gamma_c}  e^{i p x}  \frac {\partial u (x,p)}{\partial p} e^{-i p \theta} \hat \psi^\theta_c (p)dp 
\eee
because both functions $e^{-i p \theta} \hat \psi_c^\theta (p)$ and $\varphi (x,p)$ have the same value at the boundary points of $\Gamma_c$ 
(and $e^{-ip\theta}\hat \psi_c^\theta (p)$ goes to zero as $p$ goes to infinity). \ Similarly, we treat the integral over the set 
$\Gamma_1$. \ Hence,
\bee
\left [ \tilde H_F^\theta \hat \psi^\theta \right ] (p) = 
\left \{
\begin {array}{ll} 
E (p) \hat \psi_1^\theta (p) + i F \frac {\partial \hat \psi_1^\theta (p)}{\partial p} +  F h_1^\theta (p)  
& \, , \ \mbox { if } \ p \in \Gamma_1 \\ 
E(p) \hat \psi_c^\theta (p) + i F \frac {\partial \hat \psi_c^\theta  (p)}{\partial p} + F h_c^\theta (p) 
+ F \theta \hat \psi_c^\theta & \, , \ \mbox { if } \ p \in \Gamma_c
\end {array}
\right.
\eee
where 
\bee
h_1^\theta (p) &=&  \frac {i}{2\pi } \int_{\R} \overline {\varphi (x,p)} \left \{ \int_{\Gamma_1} e^{i q x}  
\frac {\partial u (x,q)}{\partial q} \hat \psi_1^\theta (q) d q + 
\int_{\Gamma_c} e^{i q x}  \frac {\partial u (x,q)}{\partial q}e^{-iq\theta} \hat \psi_c^\theta (q) d q \right \} dx\, , 
\eee
and 
\bee
h_c^\theta (p) &=&  \frac {ie^{ip\theta}}{2\pi } \int_{\R} \overline {\varphi (x,p)} \left \{ \int_{\Gamma_1} e^{i q x}  
\frac {\partial u (x,q)}{\partial q} \hat \psi_1^\theta (q) d q + \int_{\Gamma_c} e^{i q x}  
\frac {\partial u (x,q)}{\partial q}e^{-iq\theta} \hat \psi_c^\theta (q) d q \right \} dx\, . 
\eee 
By making use of the same arguments as in Theorem \ref {Theorem2} then we can exchange the integration sets obtaining that
\bee
h_1^\theta (p) &=& \int_{\Gamma_1} Y(p,q) \hat \psi_1^\theta (q) dq  + \int_{\Gamma_c} Y(p,q) e^{-iq\theta} \hat \psi_c^\theta (q) 
dq \, ,\ p\in \Gamma_1 \\ 
h_c^\theta (p) &=& e^{ip\theta} \int_{\Gamma_1} Y(p,q) \hat \psi_1^\theta (q) dq + \int_{\Gamma_c} Y(p,q) e^{i(p-q)\theta} 
\hat \psi_c^\theta (q) dq \, , \ p \in \Gamma_c 
\eee
where 
\bee
Y(p,q) = \frac {i}{2\pi} \int_{\R} e^{-i(p-q)x} \overline {u (x,p)} \frac {\partial u (x,q)}{\partial q} dx \, . %\label {f28}
\eee

\begin {remark} \label {Remark3}
We should point out that $Y(p,q) $ is well defined because the $N$-th gap is open; if not 
the periodic boundary conditions of the band and Bloch functions at the points $\pm \frac 12 N b$ would imply 
an \emph {artificial} singularity for the Bloch function at $q = \mp \frac 12 Nb \pm 0$ and thus $Y(p,q)$ and 
$\frac {\partial u (x,q)}{\partial q}$ are not well defined in these points.
\end {remark}

\begin {lemma}\label {Lemma5Bis}
If the periodic potential $V(x)$ is an analytic function in a set containing the strip $|\Im x | \le R$ then the Fourier series of 
the periodic function $\overline {u (x,p)} \frac {\partial u (x,q)}{\partial q}$
\be
\overline {u (x,p)} \frac {\partial u (x,q)}{\partial q} = \sum_{j\in \Z} \alpha_j (p,q) e^{i j b x} \label {Eqn18Bis}
\ee 
has coefficients 
\bee
\alpha_j (p,q)= \frac {1}{a} \int_0^a \overline {u (x ,p)} \frac {\partial u (x , q)}{\partial q} e^{- i j b x} dx 
\eee
satisfying the following estimates
\be
|\alpha_j (p,q) |,\ \left | \frac {\partial \alpha_j (p,q)}{\partial p} \right | , \ \left | \frac {\partial \alpha_j (p,q)}
{\partial q} \right | \le C e^{-|j| b R} \, . \label {Eqn28Bis}
\ee
for some positive constant independent of $j$, $p$ and $q$.
\end {lemma}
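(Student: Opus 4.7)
The plan is to reduce the estimate to the standard Paley--Wiener type decay for Fourier coefficients of a function that is analytic and periodic in a strip. The key point is that, since $V(z)$ is analytic in the strip $|\Im z|\le R$, every solution of the ODE $-\varphi''+V\varphi=E\varphi$ extends analytically in the spatial variable to the same strip. In particular, the periodic part $u(z,p)$ of the multisheeted Bloch function, and hence $\partial u(z,q)/\partial q$, are analytic and $a$-periodic in $z$ for $|\Im z|\le R$, for every fixed $p,q\in\tilde\R$. The first factor $\overline{u(x,p)}$ admits an analytic continuation in $x$ via the standard prescription $z\mapsto \overline{u(\bar z,p)}$, which agrees with the complex conjugate on the real axis and is still $a$-periodic; the product
\[
F_{p,q}(z):=\overline{u(\bar z,p)}\,\frac{\partial u(z,q)}{\partial q}
\]
is therefore analytic and $a$-periodic in $z$ throughout $|\Im z|\le R$.

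Next, I would apply the usual contour-shift trick. For $j>0$ shift the integration segment $[0,a]$ down to $[-iR,a-iR]$; for $j<0$ shift it up to $[iR,a+iR]$. The vertical segments cancel by the $a$-periodicity of $F_{p,q}$, so
\[
\alpha_j(p,q)=\frac{1}{a}\,e^{-|j|bR}\int_0^a F_{p,q}\bigl(x-i\,\mathrm{sgn}(j)\,R\bigr)\,e^{-ijbx}\,dx .
\]
The exponential factor $e^{-|j|bR}$ emerges precisely because $|e^{-ijb(x\mp iR)}|=e^{-|j|bR}$ on the shifted contour.

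The remaining task is to show that the integral on the right is bounded uniformly in $p,q$. For $|p|$ and $|q|$ large, Theorem \ref{Lemma1} provides exactly this: it gives $u(x,E)=1/\sqrt{a}+O(E^{-1/2})$ and $\partial u(x,E)/\partial E=O(E^{-3/2})$ uniformly for $x$ in the box $L$ (and, by an analogous argument with the path in the upper half-box, uniformly for $\Im x\in[0,R]$ as well). Combined with $dE/dq\sim 2q$, this yields uniform boundedness of $F_{p,q}(z)$ on each horizontal line $\Im z=\pm R$ for large $|p|,|q|$. For $p,q$ in a compact subset of $\tilde\R$ (including the branch points $\pm \tfrac12 nb$, where the multisheeted $u$ is continuous by construction), continuity of $u(z,\cdot)$ in its second argument and of $(z,p)\mapsto u(z,p)$ jointly, together with compactness, supplies the uniform bound. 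This establishes $|\alpha_j(p,q)|\le C\,e^{-|j|bR}$.

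For the derivative estimates I would differentiate under the integral sign: since the bounds from Theorem \ref{Lemma1} can be derived term by term with respect to $E$, the functions $\partial_p F_{p,q}$ and $\partial_q F_{p,q}$ admit the same kind of uniform bounds on the shifted contour, and the identical contour-shift produces the prefactor $e^{-|j|bR}$ for $\partial_p\alpha_j$ and $\partial_q\alpha_j$. The main technical obstacle is precisely the uniformity in $p$ and $q$ along the shifted contour: one has to combine the large-$|p|,|q|$ asymptotics from Theorem \ref{Lemma1} (which were proved only in the box with $\Im x\in[-R,0]$) with an analogous argument in the upper half-box, and then patch with a compactness argument on bounded regions of $\tilde\R$; no single estimate covers both regimes directly.
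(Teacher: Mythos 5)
Your overall strategy is the one the paper uses: continue the periodic function $\overline{u(x,p)}\,\partial_q u(x,q)$ analytically in $x$, shift the contour $[0,a]$ vertically by $R$, cancel the vertical sides by periodicity to extract the factor $e^{-|j|bR}$, and then bound the integrand on the shifted contour uniformly in $p,q$ via the large-energy asymptotics of Theorem \ref{Lemma1} (with term-by-term differentiability giving the derivative bounds). The one place where you genuinely diverge is the case $j<0$. You propose to shift \emph{up} into the box $\Im x\in[0,R]$ and assert that "an analogous argument with the path in the upper half-box" reproduces the bounds of Theorem \ref{Lemma1} there. That is not automatic, and you correctly sense it is the weak point: the proof of Theorem \ref{Lemma1} hinges on choosing $\gamma_{0,x}$ with $\Im y\ge\Im x$, so that $|e^{-2i\sqrt{E}(x-y)}|=e^{2\sqrt{E}(\Im x-\Im y)}\le 1$; for $\Im x>0$ any path from $0$ to $x$ violates this, the kernel grows like $e^{2\sqrt{E}\Im x}$, and the crude iteration no longer yields boundedness of $v(x,E)$ uniformly in $E$. (Recovering boundedness of $u$ on the line $\Im x=R$ would require exploiting the exponential decay of the Fourier coefficients of the analytic $V$ to beat that growth --- a separate argument, not a verbatim repetition.) The paper sidesteps the upper half-box entirely: for $j<0$ it estimates the complex conjugate, $\overline{\alpha_j(p,q)}=\frac1a\int_0^a u(x,p)\,\partial_q u(x,-q)\,e^{-i|j|bx}\,dx$, which turns the frequency positive again and allows the same \emph{downward} shift into the box $L$ where Theorem \ref{Lemma1} applies. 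If you adopt that conjugation device, the rest of your argument (uniform boundedness of the shifted integrand for large $|p|,|q|$ from Theorem \ref{Lemma1}, compactness for bounded $p,q$, and differentiation under the integral for $\partial_p\alpha_j$ and $\partial_q\alpha_j$ --- for the latter the paper also records $\partial^2 u/\partial q^2\sim q^{-3}$) matches the paper's proof.
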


\begin {proof}
Let us assume for a moment $j>0$. \ Then, from the Cauchy Theorem it follows that
\bee
\alpha_j (p,q) &=& \frac 1a \int_0^a u(x,-p) \frac {\partial u (x,q)}{\partial q} e^{-i j b x} dx \\ 
&=& e^{-j b R} \frac 1a \int_0^a u(x-iR,-p) \frac {\partial u (x-iR,q)}{\partial q} e^{-i j b x} dx
\eee
where $u(x-iR,-p)$ and $\frac {\partial u (x-iR,q)}{\partial q} = \frac {\partial u (x-iR,E)}{\partial E} \frac {dE}{dq}$ are uniformly bounded since Theorem \ref {Lemma1}. \ That is it follows that 
\bee
|\alpha_j (p,q) | \le C (p,q) e^{-|j| b R}  
\eee
where 
\be
C (p,q) = \max_{- R \le \Im x \le 0 } 
\left | u (x,p) \frac {\partial u(x,q)}{\partial q} \right | \, . \label {Eqn28Ter}
\ee
In fact, from the asymptotic behavior in Theorem \ref {Lemma1} then it follows that 
$\frac {\partial u (x,q)}{\partial q} = \frac {\partial u (x,E)}{\partial E}  \frac {d E}{d q}$, where 
$\left | \frac {d E}{d q} \right | \le C |q|$ and $ \left | \frac {\partial u (x,E)}{\partial E} \right | = O(q^{-3})$; hence, there 
exists a positive constant $C$ such that 
\bee
C (p,q) \le C 
\eee
for any $p$ and $q$. \ In the case $j =-|j| <0$ then we estimate its complex conjugate
\bee
\overline {\alpha_j (p,q)} = \frac 1a \int_0^a u(x,p) \frac {\partial u (x,-q)}{\partial q} e^{-i |j| b x} dx
\eee
by means of the same arguments.

In a similar way we get the estimates of the derivatives of $\alpha_j$. \ With more details, the estimate 
(\ref {Eqn28Bis}) of the derivative 
\bee
\frac {\partial \alpha_j (p,q)}{\partial p} = \frac 1a \int_0^a \frac {\partial \overline {u(x,p)}}{\partial p} 
\frac {\partial u(x,q)}{\partial q} e^{-i j b x} dx 
\eee
immediately follows by means of the same arguments as above. \ In order to get the estimate (\ref {Eqn28Bis}) of the derivative
\bee
\frac {\partial \alpha_j (p,q)}{\partial q} = \frac 1a \int_0^a \overline {u(x,p)}
\frac {\partial^2 u(x,q)}{\partial q^2} e^{-i j b x} dx 
\eee
we have to control the second derivative, with respect to $q$, of $u(x,q)$; it comes by deriving both sides of the asymptotic 
behavior given in Theorem \ref {Lemma1} obtaining 
\bee
\frac {\partial^2 u(x,q)}{\partial q^2} = \left ( \frac {d E}{d q} \right )^2 \frac {\partial^2 u (x,E)}{\partial E^2} \sim q^{-3} \, . 
\eee
The Lemma thus follows. \end {proof} 

From this fact and recalling that the Fourier transform of $e^{i b j x}$ is the Dirac's $\delta$ distribution $\delta (x-jb)$ the it 
follows that 
\bee
Y(p,q) = \sum_{j\in \Z} \alpha_j (p,q)  \delta (p-q-jb) 
\eee
where the coefficients $\alpha_j (p,q)$ satisfy the above estimates. \ Collecting all these facts and remembering that $C_0 (p)$ is a real 
valued function as discussed in Remark \ref {Remark1}, 
we can conclude that

\begin {lemma} \label {Lemma5}
We have that the analytically distorted ECMR of the Stark-Wannier operator (\ref {Eqn29}) is given by $\tilde H^\theta_{F,\eta}$ for $\eta =F$, where 
$\tilde H^\theta_{F,\eta}$ is formally defined on $\hat \psi \in {\mathcal H}_E$ as 
\bee
\left ( \tilde H_{F, \eta}^\theta \hat \psi \right ) (p) = 
\left \{ 
\begin {array}{ll} 
\left [ i F \frac {d }{d p} + E (p) \right ] \hat \psi_1 (p) +  F \left ( K_{11} \hat \psi_1 \right ) (p) + 
\eta  \left ( K_{1c}^\theta \hat \psi_c \right ) (p) & , \ p \in \Gamma_1 \\
\left [ i F \frac {d }{d p} + E (p) + F C_0 (p) \right ] \hat \psi_c (p) + F \theta \hat \psi_c (p) + & \\ 
\ \ + \eta \left ( K_{c1}^\theta \hat \psi_1 \right ) (p)  + \eta  \left ( K_{cc}^\theta \hat \psi_c \right ) (p)& , \ p \in \Gamma_c 
\end {array}
\right.
\eee
where $K_{11}$ is the intraband interaction among the first $N$ bands defined as 
\be
\left ( K_{11} \hat \psi_1 \right ) (p) &=& \sum_{j\in \Z \ : \ p-bj \in \Gamma_1} \alpha_j (p,p-bj) \hat 
\psi_1 (p-bj) \, , \ p \in \Gamma_1 \, , \label {Eqn31} 
\ee
$K_{1c}$ and $K_{c1}$ are the interband term between the first $N$ bands and the remainder bands defined as 
\bee 
\left (K_{1c}^\theta \hat \psi_c \right ) (p) &=&  \sum_{j\in \Z \ : \ p-bj \in \Gamma_c} \alpha_j (p,p-bj) e^{-i(p-bj)\theta} 
\hat \psi_c (p-bj) \, , \ p \in \Gamma_1 \\ 
\left (K_{c1}^\theta \hat \psi_1 \right ) (p) &=& \sum_{j\in \Z \ : \ p-bj \in \Gamma_1} \alpha_j (p,p-bj) e^{ip\theta} 
\hat \psi_1 (p-bj) \, , \ p \in \Gamma_c \\ 
\eee
and finally where $K_{cc}^\theta$ is defined as 
\bee
\left (K_{cc}^\theta \hat \psi_c \right ) (p) &=& \sum_{j\in \Z \setminus \{ 0 \} \ : \ p-bj \in \Gamma_c} \alpha_j (p,p-bj) 
e^{-ijb\theta} \hat \psi_c (p-bj) \, , \ p \in \Gamma_c \, . 
\eee
\end {lemma}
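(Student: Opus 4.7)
The plan is to essentially finish the bookkeeping calculation begun in the two displayed formulas for $h_1^\theta$ and $h_c^\theta$ just before the statement, using the distributional representation of the kernel $Y(p,q)$ that is made available by Lemma~\ref{Lemma5Bis}. Nothing deep is needed beyond those two inputs; the content of the lemma is a decomposition of the operator into an explicit diagonal piece plus four bounded off-diagonal couplings indexed by which of the sets $\Gamma_1,\Gamma_c$ the argument and the target live in.

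\textbf{Step 1: Substitute $Y(p,q)$ as a distributional sum.} From Lemma~\ref{Lemma5Bis} and the representation of the Fourier series of $e^{ijbx}$ as a Dirac comb one has $Y(p,q)=\sum_{j\in\Z}\alpha_j(p,q)\,\delta(p-q-jb)$, with coefficients decaying exponentially in $|j|$. Plugging this into the formulas for $h_1^\theta(p)$ and $h_c^\theta(p)$ and using the exponential decay (\ref{Eqn28Bis}) to exchange summation and integration (a standard mollifier/dominated-convergence argument, analogous to the one used at the end of the proof of Theorem~\ref{Theorem2}), each integral collapses to a sum of point evaluations at $q=p-jb$.

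\textbf{Step 2: Sort by membership in $\Gamma_1$ or $\Gamma_c$.} For $p\in\Gamma_1$, the integrals over $\Gamma_1$ contribute only those $j$ such that $p-jb\in\Gamma_1$, producing the operator $K_{11}$; the integrals over $\Gamma_c$ contribute only those $j$ such that $p-jb\in\Gamma_c$, producing $K^\theta_{1c}$ (with the extra factor $e^{-i(p-bj)\theta}$ coming from the distortion of $\hat\psi_c$). For $p\in\Gamma_c$ the same reasoning produces $K^\theta_{c1}$ and, from the integral over $\Gamma_c$, the prospective operator $K^\theta_{cc}$ together with a leftover $j=0$ term with kernel $\alpha_0(p,p)e^{-i\cdot 0\cdot b\theta}=C_0(p)$.

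\textbf{Step 3: Extract the $j=0$ diagonal contribution on $\Gamma_c$ and the distortion shift.} The $j=0$ term on $\Gamma_c$ gives exactly the multiplicative operator $C_0(p)\hat\psi_c(p)$, which is real-valued by Remark~\ref{Remark1} and hence is harmless to treat as part of the ``free'' diagonal part rather than as an off-diagonal perturbation. Removing it from the sum produces $K^\theta_{cc}$ with the sum restricted to $j\neq 0$. The term $F\theta\,\hat\psi_c(p)$ on $\Gamma_c$ arises from the commutator $[x,{\mathcal U}^\theta]$ already computed explicitly before the lemma statement: the identity $\partial_p\bigl(e^{-ip\theta}\hat\psi_c^\theta\bigr)=e^{-ip\theta}\bigl(-i\theta\hat\psi_c^\theta+\partial_p\hat\psi_c^\theta\bigr)$ turns the $-i\partial_p$ coming from the representation of $x$ into $iF\partial_p+F\theta$ on $\Gamma_c$.

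\textbf{What will need care.} The only non-mechanical point is the justification that all Fubini/mollifier exchanges of sum and integral are legitimate, and that the formal kernel identities hold as genuine identities in ${\mathcal H}_E$ rather than merely formally. Both issues are controlled by Lemma~\ref{Lemma5Bis} (exponential decay of $\alpha_j$ in $j$ and uniform bounds in $p,q$) together with the hypothesis that the $N$-th gap is open, which, as noted in Remark~\ref{Remark3}, guarantees that $\partial u/\partial q$, and therefore $Y(p,q)$, is well-defined at the endpoints $\pm \tfrac12 Nb$ of $\Gamma_1$ and $\Gamma_c$ so that no spurious boundary contributions appear when the integration by parts in $p$ is performed.
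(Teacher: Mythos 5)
Your proposal is correct and follows essentially the same route as the paper, which likewise derives the lemma by inserting the Dirac-comb representation $Y(p,q)=\sum_j\alpha_j(p,q)\delta(p-q-jb)$ into the expressions for $h_1^\theta$ and $h_c^\theta$, sorting the resulting point evaluations by whether $p-jb$ lies in $\Gamma_1$ or $\Gamma_c$, and peeling off the $j=0$ diagonal term $C_0(p)$ on $\Gamma_c$ together with the $F\theta$ shift produced by the distortion. The justifications you flag (exchange of sum and integral via the exponential decay in Lemma~\ref{Lemma5Bis}, and the open $N$-th gap guaranteeing no spurious boundary terms, per Remark~\ref{Remark3}) are exactly the ones the paper relies on.
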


In conclusion, we are able to obtain the explicit expression of the analytically distorted ECMR of the Stark-Wannier operator preparing the 
ground in order to apply the regular perturbation theory. \ In fact, by construction it turns out that $\tilde H_{F,0}^\theta$ is the 
decoupled band approximation where the coupling term between the first $N$ bands and the other bands is neglected. 

\section {Spectral analysis of the unperturbed operator $\tilde H^\theta_{F,0}$}

Hereafter, let us fix $N=1$ for the sake of simplicity. \ We introduce a change of notation: in the case of $N=1$ then $\Gamma_1$ simply 
coincides with the Brillouin zone ${\mathcal B}$. \ The vectors $\hat \psi \in {\mathcal H}_E$ can be written 
as $\hat \psi = (\hat \psi_1 (k) , \hat \psi_c (p))$ where $\hat \psi_1 (k)$ satisfies the periodic boundary conditions on the Brillouin 
zone ${\mathcal B}$ and where $\hat \psi_c (p)$ is such that $\hat \psi_c (-\frac 12 b - 0) = \hat \psi_c (\frac 12 b + 0)$ together with 
its derivatives. \ The integral operator $K_{11}$ defined in (\ref {Eqn31}) simply reduces to the multiplication operator $X_{1,1}$ 
defined in (\ref {Eqn22}). 

The unperturbed operator $\tilde H_{F,0}^\theta$ can be written on ${\mathcal H}_E$ 
as $\tilde H_1 \otimes \tilde H_c^\theta$, where $\tilde H_1$ and $\tilde H_c^\theta$ are formally defined 
on $L^2 ({\mathcal B},dk)$ and $ L^2 (\Gamma_c ,dp) $ as 
\bee
\tilde H_1 &=& i F \frac {\partial}{\partial k} + E_1 (k) + F X_{1,1} (k)  \\ 
\tilde H_c^\theta &=& i F \frac {\partial}{\partial p} + E(p) + F C_0 (p) + F \theta 
\eee
and its spectrum is given by the union of the spectrum of the operators $\tilde H_1$ and $\tilde H_c^\theta$.

\begin {lemma} \label {Lemma6}
The resolvent of $\tilde H_1$ on a vector $\phi_1 \in L^2 ({\mathcal B}, dk)$ is given by
\be
\left ( \left [ \tilde H_1 - z \right ]^{-1} \phi_1 \right ) (k) = - 
\frac iF \left \{ \int_{-b/2}^k e^{-\frac iF \int_k^q Q_1 (s) d s } \phi_1 (q) d q - 
\frac {\int_{\mathcal B} e^{-\frac iF \int_k^q Q_1 (s)  d s } \phi_1 (q) dq}{1-e^{\frac iF b (z-{\mathcal E_{1,0}}) }}  
\right \} \label {Eqn32}
\ee
where we set $Q_1(k) = E_1 (k) + F X_{1,1} (k)-z$. \ The spectrum of the operator $\tilde H_1$ on $L^2 ({\mathcal B}, dk)$ is purely discrete 
and it consists of a ladder of simple eigenvalues 
\be
{\mathcal E}_{1,j} =  \frac 1b \int_{\mathcal B} \left [ E_1 (k) + F X_{1,1} (k) \right ] d k + j F a , \ j \in \Z \, . \label {Eqn32Ter}
\ee
\end {lemma}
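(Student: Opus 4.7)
The plan is to convert the resolvent equation $(\tilde H_1 - z)\psi = \phi_1$ into a first-order linear ODE on the Brillouin zone with a periodic boundary condition, solve it explicitly by variation of constants, and read off both the resolvent kernel and the spectrum from the explicit formula. First, by the definition of $\tilde H_1$, the resolvent equation amounts to
\[
i F \psi'(k) + Q_1(k)\psi(k) = \phi_1(k), \quad k\in \mathcal{B}, \qquad \psi(-b/2)=\psi(b/2),
\]
where $Q_1(k) = E_1(k) + F X_{1,1}(k) - z$. The integrating factor is $\mu(k)=\exp\!\bigl(-\tfrac{i}{F}\int_{-b/2}^{k} Q_1(s)\,ds\bigr)$, which yields the one-parameter family of solutions
\[
\psi(k) = e^{\frac{i}{F}\int_{-b/2}^{k} Q_1(s)\,ds}\,\psi(-b/2) - \frac{i}{F}\int_{-b/2}^{k} e^{-\frac{i}{F}\int_k^q Q_1(s)\,ds}\,\phi_1(q)\,dq.
\]

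Next, I would fix the free constant $\psi(-b/2)$ by imposing the periodicity condition at $k=b/2$. Using the identity $\int_{\mathcal{B}} Q_1(s)\,ds = b(\mathcal{E}_{1,0}-z)$, the boundary condition becomes
\[
\psi(-b/2)\bigl[1 - e^{-\frac{i b}{F}(z-\mathcal{E}_{1,0})}\bigr] \;=\; -\frac{i}{F}\int_{\mathcal{B}} e^{-\frac{i}{F}\int_{b/2}^q Q_1(s)\,ds}\,\phi_1(q)\,dq.
\]
After dividing through and re-expressing the integral using $e^{-\frac{i}{F}\int_{b/2}^{q}Q_1\,ds} = e^{-\frac{ib}{F}(z-\mathcal{E}_{1,0})}\,e^{-\frac{i}{F}\int_{-b/2}^{q} Q_1\,ds}$, the constant takes the form obtained by substituting $k=-b/2$ into the claimed formula \eqref{Eqn32}; substitution back and a short manipulation (absorbing the homogeneous factor into the $\int_k^q$ form via $\int_{-b/2}^{k}+\int_k^q=\int_{-b/2}^{q}$) produces \eqref{Eqn32} for all $k \in \mathcal{B}$.

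Finally, I would read off the spectral information. The resolvent formula is bounded on $L^2(\mathcal{B})$ for every $z$ outside the zeros of $1 - e^{\frac{i b}{F}(z-\mathcal{E}_{1,0})}$; these zeros are exactly the points $z$ with $\tfrac{b}{F}(z-\mathcal{E}_{1,0}) \in 2\pi\Z$, which gives the ladder $\mathcal{E}_{1,j} = \mathcal{E}_{1,0} + jFa$, $j\in\Z$, of equation \eqref{Eqn32Ter}. At each such $z=\mathcal{E}_{1,j}$ the homogeneous equation $iF\psi'+Q_1\psi=0$ has the solution $\psi(k)=\exp\!\bigl(\tfrac{i}{F}\int_{-b/2}^{k}(E_1+FX_{1,1}-z)\,ds\bigr)$, which by construction satisfies the periodic boundary condition precisely when $z\in\{\mathcal{E}_{1,j}\}$; since the ODE is first-order this kernel is one-dimensional, so the eigenvalue is simple. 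Because the resolvent exists on $\C\setminus\{\mathcal{E}_{1,j}\}$, the spectrum is purely discrete and equals the ladder.

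The main technical point is not any single estimate but rather the careful bookkeeping of the exponential prefactors in Step 2: one must rewrite the homogeneous component so that the same kernel $\exp(-\tfrac{i}{F}\int_k^q Q_1\,ds)$ appears in both integrals of \eqref{Eqn32}, and verify that the resulting operator is bounded on $L^2(\mathcal{B})$ uniformly in compact subsets of the resolvent set, which is routine once the kernel is written in the symmetric form above.
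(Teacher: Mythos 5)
Your proposal is correct and follows essentially the same route as the paper, which simply asserts that formula (\ref{Eqn32}) ``comes by means of a simple calculation'' and exhibits the periodic eigenvector; you have merely filled in the variation-of-constants computation, the fixing of $\psi(-b/2)$ by periodicity, and the identification of the poles with the ladder $\mathcal{E}_{1,0}+jFa$. Note that your eigenfunction $\exp\bigl(+\tfrac{i}{F}\int_{-b/2}^{k}(E_1+FX_{1,1}-z)\,ds\bigr)$ carries the sign consistent with the homogeneous equation $iF\psi'+Q_1\psi=0$ (the paper's (\ref{Eqn32Bis}) has the opposite sign in the exponent, which is immaterial for the quantization condition).
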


\begin {proof} Indeed, formula (\ref {Eqn32}) for the resolvent directly comes by means of a simple calculation, and the  
eigenvalue equation has normalized eigenvector
\be
\hat \psi_{1,j} (k) = \frac {1}{\sqrt {b}} e^{-\frac {i}{F} \int_{-b/2}^k \left [ E_1 (q) + F X_{1,1,} (q) - 
E \right ] d q } \label {Eqn32Bis}
\ee
where $E= {\mathcal E}_{1,j} $ in order to have $\hat \psi_{1,j} (-b/2) = \hat \psi_{1,j} (+b/2)$. 
\end {proof}

\begin {lemma} \label {Lemma7} 
The spectrum of the operator $\tilde H_c^\theta$ is purely essential and it coincides with the line $\Im z = F \Im \theta$.
\end {lemma}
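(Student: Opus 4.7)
The plan is to mimic Lemma \ref{Lemma6} (which treated $\tilde H_1$ on the compact Brillouin zone) on the non-compact domain $\Gamma_c=(-\infty,-b/2)\cup(b/2,+\infty)$ glued by the matching condition $\hat\psi_c(-b/2-0)=\hat\psi_c(b/2+0)$. The crucial observation is that both $E(p)$ and $C_0(p)$ are real-valued on $\tilde\R$ (the former by construction, the latter by Remark \ref{Remark1}), so the imaginary part of the effective ``potential'' of the first-order operator $\tilde H_c^\theta$ is precisely the constant $F\,\Im\theta$.

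First I would solve the homogeneous equation $(\tilde H_c^\theta-z)h=0$ and obtain, by a direct integration,
\begin{equation*}
h(p) = C\,\exp\!\Bigl(\tfrac{i}{F}\int_{p_0}^{p}\bigl[z-F\theta-E(q)-FC_0(q)\bigr]dq\Bigr),
\end{equation*}
so that $|h(p)|=|C|\,e^{\lambda(p-p_0)}$ with $\lambda:=\Im z/F-\Im\theta$. For $\lambda\neq 0$, $h$ is $L^2$-integrable at exactly one end of each semi-infinite component of $\Gamma_c$, while if $\lambda=0$ then $|h|\equiv |C|$ is bounded but fails to be square-integrable at either end.

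Next, assuming $\lambda\neq 0$, I would construct the resolvent explicitly by the integrating-factor method analogous to (\ref{Eqn32}). On each of the two components of $\Gamma_c$ one picks the limit of integration that yields decay at the relevant infinity (at $+\infty$ on the right if $\lambda>0$, at $-\infty$ on the left if $\lambda<0$, and vice versa in the opposite case). The single remaining free constant is then fixed by the matching condition at $\pm b/2$. A Young-type estimate for the convolution-like kernel proportional to $e^{-|\lambda||p-q|}$ would give $\|(\tilde H_c^\theta-z)^{-1}\phi\|_{L^2(\Gamma_c)}\le C_\lambda\|\phi\|_{L^2(\Gamma_c)}$. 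The same argument applied with $\phi\equiv 0$ shows that the matching forces the free constant to vanish on both components, so no $L^2$-eigenfunction exists. Hence the resolvent set of $\tilde H_c^\theta$ contains the complement of the line $\{z:\Im z=F\Im\theta\}$ and $\sigma_d(\tilde H_c^\theta)=\emptyset$.

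Finally, for any $z_0$ with $\Im z_0=F\,\Im\theta$ I would invoke Weyl's singular-sequence criterion. Letting $h(p)$ denote the unit-modulus homogeneous solution restricted to the right component $(b/2,+\infty)$, pick $\chi_n\in C_c^\infty$ supported in $[n,2n]$, equal to $1$ on $[n+1,2n-1]$, with $|\chi_n'|\le C$, and set $\hat\psi_n:=N_n\chi_n h$ (extended by zero to the left component), where $N_n\sim n^{-1/2}$ normalizes $\|\hat\psi_n\|_{L^2}=1$. Since $(\tilde H_c^\theta-z_0)h=0$, one has $(\tilde H_c^\theta-z_0)\hat\psi_n=iF N_n\chi_n' h$, whose $L^2$-norm is $O(n^{-1/2})\to 0$, while the supports escape to infinity so $\hat\psi_n\rightharpoonup 0$ weakly. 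Thus $z_0\in\sigma_{ess}(\tilde H_c^\theta)$, completing the characterization of the spectrum. The main technical point I expect to require care is verifying that the resolvent, once the matching constant has been fixed by the boundary condition at $\pm b/2$, is genuinely bounded as an operator on $L^2(\Gamma_c)$ rather than merely providing pointwise bounded solutions; this reduces to controlling the matching constant as a linear functional of $\phi$ paired against an exponentially decaying weight, which is handled by Cauchy--Schwarz together with the exponential bound on $|h|$.
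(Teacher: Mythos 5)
Your proposal is correct and follows essentially the same route as the paper: an explicit integrating-factor construction of the resolvent on the two half-lines of $\Gamma_c$, with the direction of integration chosen according to the sign of $\Im(z-F\theta)$ and the remaining free constant fixed by the matching condition at $\pm\frac12 b$. You are in fact somewhat more complete than the paper, which asserts rather than proves both the operator-norm bound on the resolvent (deferred there to Lemma~\ref{Lemma8}) and the inclusion of the line $\Im z = F\Im\theta$ in the essential spectrum, for which your Weyl singular-sequence argument is the natural supplement; the only slip is the sign of the exponent in your homogeneous solution, which should be $-\tfrac{i}{F}\int_{p_0}^{p}[\,\cdot\,]\,dq$, although the modulus $|C|e^{\lambda(p-p_0)}$ you then use is the correct one.
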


\begin {proof} Now, we look for the resolvent of $\tilde H_c^\theta$, let $\phi \in L^2 (\Gamma_c ,dp)$ and we consider the solution of the 
equation (see, e.g., \cite {K}) 
\bee
\left [ i F \frac {\partial}{\partial p} + E(p) + F C_0 (p) + F\theta - z \right ] \hat \psi_c = \phi
\eee
for $p \in \Gamma_c$ and with the condition 
\bee
\hat \psi_c \left (-\frac 12 b-0 \right )= \hat \psi_c \left ( +\frac 12 b+0 \right ).
\eee
If we denote by 
\bee
\Gamma_+ = \tilde \R \cap \left [ \frac 12 b , + \infty \right ) \ \mbox { and } \ \Gamma_- = \left ( - \infty , -\frac 12 b \right ] 
\cap \tilde \R
\eee
then, by means of a direct calculation it turns out that (hereafter $\fint_A$ denotes the integral of a meromorphic function on the 
set $A\tilde \C$)
\bee
\hat \psi_c (p) = 
\left \{
\begin {array}{ll} 
-\frac iF \fint_p^{+\infty} e^{\frac iF \fint_q^p Q_c(s) ds } \phi (q) dq & \ \mbox { if } \ \Im (z-F \theta ) >0 \\ 
\frac iF \fint^p_{b/2} e^{\frac iF \fint_q^p Q_c(s) ds } \phi (q) dq  + C_+ e^{\frac iF \fint_{b/2}^p Q_c(q) dq }  
& \ \mbox { if } \ \Im (z-F \theta ) <0 
\end {array}
\right. \, , \ p \in \Gamma_+
\eee
and
\bee
\hat \psi_c (p) = 
\left \{
\begin {array}{ll} 
\frac iF \fint_p^{-b/2} e^{\frac iF \int_q^p Q_c(s) ds } \phi (q) dq + C_- e^{\frac iF \fint_{-b/2}^p Q_c (q) dq } 
& \ \mbox { if } \ \Im (z-F \theta ) >0 \\ 
-\frac iF \fint_{-\infty}^p e^{\frac iF \fint_q^p Q_c (s) ds } \phi (q) dq  & \ \mbox { if } \ \Im (z-F \theta ) <0 
\end {array}
\right. \, , \ p \in \Gamma_-
\eee
where we set here 
\bee
Q_c (p)=E(p)+F\theta + F C_0 (p) -z \, .
\eee
Now, we have to require that $\hat \psi_c (- b/2-0) = \hat \psi_c (+b/2+0)$, which implies that
\bee
C_+ &=& - \frac iF \fint^{-b/2}_{-\infty} e^{-\frac iF \fint_{-b/2}^q Q_c (s) ds } \phi (q) dq \\ 
C_- &=& - \frac iF \fint_{b/2}^{+\infty} e^{-\frac iF \fint_{b/2}^q Q_c (s) ds } \phi (q) dq 
\eee
Then it follows that the spectrum of $\tilde H_c^\theta$ coincides with the line $\R +i F \Im \theta$ and thus it is purely 
essential. \end {proof}

In conclusion
\bee
\sigma (\tilde H_{F,0}^\theta ) = \sigma_d (\tilde H_{F,0}^\theta ) \cup \sigma_{ess} (\tilde H_{F,0}^\theta ) 
\eee
where
\bee
\sigma_d (\tilde H_{F,0}^\theta ) &=& \sigma (\tilde H_1 ) = \left \{ {\mathcal E}_{1,j}, \ j\in \Z \ \right \} \\ 
\sigma_{ess} (\tilde H_{F,0}^\theta ) &=& \sigma (\tilde H_c^\theta ) = \R +i F \Im \theta
\eee

Now, let us fix $\Im \theta < 0$ and let $\gamma_j = \partial \Omega_j$ where 
\bee
\Omega_j =\left [ {\mathcal E}_{1,j} - \frac 12 F a , {\mathcal E}_{1,j} + \frac 12 F a \right ] 
\times i \left [ \frac 12 F \Im \theta , - \frac 12 F \Im \theta \right ] 
\eee
The following estimates for the resolvent operators of $\tilde H_1$ and $\tilde H_c^\theta$ for $z\in \gamma_j$ hold true

\begin {lemma} \label {Lemma8}
There exists a positive constant $C>0$ independent of $F$ and $j$ such that 
\be
\| [\tilde H_1 - z ]^{-1} \| \le \frac CF ,  \ \ \forall z \in \gamma_j \label {Eqn33}
\ee
and
\be
\| [\tilde H_c^\theta - z ]^{-1} \| \le \frac CF ,  \ \ \forall z \in \gamma_j \, . \label {Eqn34}
\ee
\end {lemma}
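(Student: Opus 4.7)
The plan is to start from the explicit formulas given in Lemma~\ref{Lemma6} and Lemma~\ref{Lemma7} and to estimate every piece by a Schur test. The factor $1/F$ is already explicit in both formulas, so what remains is to bound the remaining integral operators and scalar factors by constants that are uniform in $F>0$ and in $j\in\Z$.

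For $[\tilde H_1-z]^{-1}$, the essential observation is a translation-invariance of the resonant denominator. Writing $z={\mathcal E}_{1,j}+w$ with $w\in\gamma_0=\partial\Omega_0$ and using ${\mathcal E}_{1,j}-{\mathcal E}_{1,0}=jFa$ together with $ba=2\pi$, one obtains
\bee
1-e^{\frac{i}{F}b(z-{\mathcal E}_{1,0})}=1-e^{2\pi ij}\,e^{\frac{ib}{F}w}=1-e^{\frac{ib}{F}w},
\eee
which depends only on $w\in\gamma_0$ and not on $j$. On the two vertical edges $\Re w=\pm Fa/2$ one has $e^{\frac{ib}{F}w}=-e^{-\frac{2\pi}{Fa}\Im w}$, and hence $|1-e^{\frac{ib}{F}w}|\ge 1$; on the two horizontal edges $\Im w=\pm F\Im\theta/2$ the modulus $|e^{\frac{ib}{F}w}|=e^{\mp\pi\Im\theta/a}$ is strictly different from $1$, so that the inequality $|1-re^{i\phi}|\ge|r-1|$ yields a lower bound depending only on $\Im\theta$. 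The exponential $e^{-\frac{i}{F}\int_k^qQ_1(s)\,ds}$ has modulus bounded by $e^{b|\Im\theta|/2}$ on $\mathcal{B}\times\mathcal{B}$, since $X_{1,1}(k)$ is real so that $\Im Q_1(s)=-\Im z$ and $|\Im z|\le F|\Im\theta|/2$ on $\gamma_j$. Applying Schur's test to the Volterra kernel and to the rank-one-type kernel in (\ref{Eqn32}) then yields (\ref{Eqn33}).

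For $[\tilde H_c^\theta-z]^{-1}$, a direct computation gives $\Im(z-F\theta)=\Im z-F\Im\theta\ge F|\Im\theta|/2>0$ for $z\in\gamma_j$, so only the first alternative in each of the formulas of Lemma~\ref{Lemma7} is relevant: no boundary term appears for $p\in\Gamma_+$, while for $p\in\Gamma_-$ both the Volterra integral $\fint_p^{-b/2}$ and the $C_-$-term contribute. The function $Q_c(p)=E(p)+F\theta+FC_0(p)-z$ has constant imaginary part $\Im Q_c=F\Im\theta-\Im z\le-F|\Im\theta|/2$ on $\tilde\R$, since $E$ and $C_0$ are real there; hence $|e^{\frac{i}{F}\int_q^pQ_c(s)\,ds}|=e^{(q-p)\Im Q_c/F}\le e^{-|p-q||\Im\theta|/2}$ along the oriented integration ranges prescribed by the formula on $\Gamma_\pm$. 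A Schur test then bounds each Volterra operator by a constant depending only on $|\Im\theta|$, and the same exponential estimate controls $|C_-|$ and shows that the multiplier $e^{\frac{i}{F}\fint_{-b/2}^pQ_c(q)\,dq}$ lies in $L^2(\Gamma_-)$ with norm bounded independently of $F$. Multiplying by the prefactor $1/F$ gives (\ref{Eqn34}).

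The main obstacle is really the denominator identity used for $\tilde H_1$: without the factor $e^{\frac{ib}{F}({\mathcal E}_{1,j}-{\mathcal E}_{1,0})}=1$ the estimate would fail to be uniform in $j$ and would blow up as $|j|\to\infty$ or as $F\to 0$. Once the reduction to the single rectangle $\gamma_0$ is performed, the remainder of the argument is routine Schur-type estimation. The analysis for $\tilde H_c^\theta$ is milder, because $\gamma_j$ sits strictly in the open upper half-plane of $z-F\theta$, where the kernel defining the resolvent is already absolutely convergent with exponential decay.
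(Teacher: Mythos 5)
Your proposal is correct and follows essentially the same route as the paper: both arguments start from the explicit resolvent formulas of Lemma \ref{Lemma6} and Lemma \ref{Lemma7}, reduce to $j=0$ via the ladder translation ${\mathcal E}_{1,j}-{\mathcal E}_{1,0}=jFa$ (your identity $e^{\frac{i}{F}b\,jFa}=e^{2\pi i j}=1$ is just the explicit form of the paper's translation-invariance remark), bound the denominator $1-e^{\frac{i}{F}b(z-{\mathcal E}_{1,0})}$ away from zero uniformly on $\gamma_0$, and exploit $\Im(z-F\theta)\ge \frac12 F|\Im\theta|$ to get exponential decay of the kernel on $\Gamma_c$. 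Your use of the Schur test is equivalent to the paper's $L^1$/$L^\infty$ bounds combined with Riesz--Thorin interpolation, and you in fact supply more detail than the paper on the edge-by-edge lower bound for the denominator, which the paper merely asserts.
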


\begin {proof} First of all we remark that the Stark-Wannier spectral problem is invariant with respect the 
translation $E \to E+Fa$, then the above estimates, if hold true, are independent of $j$ and thus we can assume, 
for argument's sake, $j=0$. \ The proof of the estimate (\ref {Eqn33}) directly comes from formula (\ref {Eqn32}) and from 
the following estimate
\bee
\min_{z\in \gamma_0} |1-e^{\frac iF b (z-{\mathcal E_{1,0}})} | \ge C >0 \, 
\eee
for some positive constant $C$ independent of $F$. \ In order to prove estimate (\ref {Eqn34}) we observe that $\Im (z-F\theta) >0$ for 
any $z \in \gamma_0$; then from the proof of Lemma \ref {Lemma7} it follows that the resolvent $\hat \psi = [\tilde H_c^\theta - z ]^{-1} 
\phi$, where $\phi \in L^2 (\Gamma_c ,dp)$ is a rapidly decreasing test function, is given by 
\be
\hat \psi (p) = 
\left \{
\begin {array}{ll} 
-\frac iF \fint_p^{+\infty} e^{\frac iF \fint_q^p Q_c(s) ds } \phi (q) dq & \ \mbox { if } \ p \in \Gamma_+  \\ 
\frac iF \fint_p^{-b/2} e^{\frac iF \fint_q^p Q_c(s) ds } \phi (q) dq + & \\ 
\ \ - \frac iF e^{\frac iF \fint_{-b/2}^p Q_c (q) dq } \fint_{b/2}^{+\infty} 
e^{-\frac iF \fint_{b/2}^q Q_c (s) ds } \phi (q) dq  
&  \ \mbox { if } \ p \in \Gamma_- 
\end {array}
\right. \, . \label {Eqn35}
\ee
For $p \in \Gamma_+$  we have that
\be
|\hat \psi (p) | \le \frac {1}{F} \fint_p^{+\infty} e^{c (p-q)} |\phi (q) | dq \, , \ c = \frac {1}{F} \Im (z-F \theta ) \ge a  
\label {Eqn31Bis}
\ee
for some $a>0$ independent of $F$ since $z \in \gamma_0$. \ Then, it turns out that 
\bee
|\hat \psi (p) | \le \frac 1F \| \phi \|_{L^\infty (\Gamma_+)} \int_p^{+\infty} e^{c (p-q)}  dq \le \frac {C}{F} \| \phi 
\|_{L^\infty (\Gamma_+)} 
\eee
for some $C>0$. \ Furthermore
\bee
\fint_{b/2}^{+\infty} |\hat \psi (p) | d p 
& \le & \frac {1}{F} \fint_{b/2}^{+\infty} dp \fint_p^{+\infty} e^{c (p-q)} |\phi (q) | dq \\ 
& \le & \frac {1}{F} \fint_{b/2}^{+\infty} |\phi (q) | dq \int_{b/2}^p e^{c (p-q)}  dp \\ 
& = & \frac {1}{\Im (z-F\theta)} \fint_{b/2}^{+\infty} \left [ 1-e^{- c(q-b/2)} \right ] |\phi (q) | dq \\
&\le & \frac {C}{F} \| \phi \|_{L^1 (\Gamma_+)}  
\eee
for some $C>0$. \ Then, we have that 
\bee
\| \hat \psi \|_{L^\infty (\Gamma_+)} \le \frac {C}{F} \| \phi \|_{L^\infty (\Gamma_+)} \ \ \mbox { and } \ \ 
\| \hat \psi \|_{L^1 (\Gamma_+)} \le \frac {C}{F} \| \phi \|_{L^1 (\Gamma_+)}
\eee
for some $C>0$. \ From this fact and from the Riesz-Thorin interpolation theorem it follows that 
\bee
\| \hat \psi \|_{L^2 (\Gamma_+)} \le \frac {C}{F} \| \phi \|_{L^2 (\Gamma_+)} \, . 
\eee
Similarly, we have the same result for $p\in \Gamma_-$; that is:
\bee
|\hat \psi (p) | \le \frac {1}{F} \left \{ \fint_p^{-b/2} e^{c(p-q)} |\phi (q)| dq + e^{c(p+b/2)} \fint_{b/2}^{+\infty} e^{-c(q-b/2)} 
|\phi (q)| dq \right \} \, , p \in \Gamma_-
\eee
where the first integral can be estimated as in the case $p\in \Gamma_+$, while the second integral is simply estimated as follows
\bee
\left \| e^{c(p+b/2)} \fint_{b/2}^{+\infty} e^{-c (q-b/2) } |\phi (q) | dq \right \|_{L^2(\Gamma_-)} &\le & C \left | \fint_{b/2}^{+\infty} 
e^{-c (q-b/2) } |\phi (q) | dq \right | \\ 
&\le & C \left \| e^{-c (q-b/2) } \right \|_{L^2(\Gamma_+ , dq)} \left \| \phi (q) \right \|_{L^2(\Gamma_+ , dq)} \\ 
&\le & C  \left \| \phi (q) \right \|_{L^2(\Gamma_+ , dq)}
\eee
for some $C>0$. \ Hence, we can conclude that
\bee
\| \hat \psi \|^2_{L^2 (\Gamma_c )} &=& \| \hat \psi \|^2_{L^2 (\Gamma_+ )} + \| \hat \psi \|^2_{L^2 (\Gamma_- )} \\ 
&\le & \frac {C}{F^2} \| \phi \|^2_{L^2 (\Gamma_+ )} + \frac {C}{F^2} \left [ \| \phi \|_{L^2 (\Gamma_- )}+  \| \phi \|_{L^2 (\Gamma_+ )} \right ]^2 \\ 
&\le & \frac {C}{F^2} \left [ \| \phi \|^2_{L^2 (\Gamma_+ )} + \| \phi \|^2_{L^2 (\Gamma_- )} \right ] \\ 
& = & \frac {C}{F^2}  \| \phi \|^2_{L^2 (\Gamma_c )}
\eee
Then, the Lemma follows because the set of rapidly decreasing functions belonging to $L^2 (\Gamma_c , dp)$ is dense in 
$L^2 (\Gamma_c , dp)$. 
\end {proof}

\section {Perturbation results and proof of Theorem \ref {Theorem1}.}

\subsection {Norm estimate of the perturbation}

We then denote by $K^\theta$ the perturbation defined on ${\mathcal H}_E$ as 
\bee
K^\theta \hat \psi = \left ( K^\theta_{1c} \hat \psi_c , K^\theta_{c1} \hat \psi_1 + K^\theta_{cc} \hat \psi_c \right ) 
\eee
where $K^\theta_{1c}$, $K^\theta_{c1}$ and $K^\theta_{cc}$ are defined in Lemma \ref {Lemma5}, and thus 
\bee
\tilde H^\theta_{F,\eta} = \tilde H_{F,0}^\theta + \eta K^\theta \, . 
\eee

\begin {lemma} \label {Lemma9}
Let $V(x)$ be an analytic function in an open set containing the strip $|\Im \theta | \le R$. \ Then, the linear operator $K^\theta$ is 
bounded for any $\theta \in \C$ such that $ 0 < |\Im \theta | < R$:
\bee
\| K^\theta \| \le C_1
\eee
for some $C_1$ independent of $F$.
\end {lemma}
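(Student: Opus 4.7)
The plan is to bound $K^\theta$ by splitting it into the three constituent pieces $K_{1c}^\theta$, $K_{c1}^\theta$ and $K_{cc}^\theta$ introduced in Lemma \ref{Lemma5} and estimating each one separately in operator norm. Each of these has the structure of a discrete shift-convolution
\bee
(K \hat \phi )(p) = \sum_{j \in \Z} k_j (p) \, \hat \phi (p-bj)
\eee
where the sum is restricted to those $j$ for which the domain constraints on both $p$ and $p-bj$ are satisfied, and where $k_j (p) = \alpha_j (p, p-bj)$ is multiplied by a factor $e^{-i(p-bj) \theta }$, $e^{ip \theta}$ or $e^{-ibj \theta}$, respectively.

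Each such $K$ can be written as a series of multiplication operators composed with shift operators. Since shifts between $L^2$ spaces on the relevant subsets of $\tilde \R$ are partial isometries, the triangle inequality gives
\bee
\| K \| \le \sum_{j \in \Z} \sup_p |k_j (p) |,
\eee
so the problem reduces to showing that this series converges to a constant independent of $F$ (note that $F$ does not appear in any $k_j$). The decisive input is the uniform estimate $|\alpha_j(p,q)| \le C e^{-|j|b R}$ provided by Lemma \ref{Lemma5Bis}, which must absorb the exponential factor coming from the distortion. The hypothesis $|\Im \theta| < R$ is crucial: each exponential factor contributes at most $e^{|j| b |\Im \theta|}$, which is strictly dominated by $e^{-|j| b R}$ and produces the geometrically decaying bound $e^{-|j| b (R- |\Im \theta|)}$.

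I carry out this verification case by case. For $K_{cc}^\theta$ the factor is $|e^{-ibj \theta}| = e^{jb \Im \theta}$, so $\sup_p |k_j(p)| \le C e^{-|j| b (R - |\Im \theta|)}$ for both signs of $j$ and the series converges. For $K_{1c}^\theta$ one has $|e^{-i(p-bj)\theta}| = e^{(p-bj)\Im \theta}$ with $p \in \Gamma_1$ bounded, so after factoring out $e^{p \Im \theta}$ as an $O(1)$ quantity one again obtains the same geometric bound. For $K_{c1}^\theta$ the constraint $p - bj \in \Gamma_1$ forces $p = bj + O(1)$, hence $|e^{ip\theta}| = e^{-p \Im \theta} \le C e^{|j| b |\Im \theta|}$ and the combination with $e^{-|j|bR}$ again yields a convergent geometric series.

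Summing the three resulting estimates, for instance via $\|K^\theta\| \le \|K_{1c}^\theta\| + \|K_{c1}^\theta\| + \|K_{cc}^\theta\|$, produces the bound $\|K^\theta\| \le C_1$ with $C_1$ depending only on $V$, $R$, $b$ and $\Im \theta$. The main obstacle is not really analytic but combinatorial: carefully tracking the signs in each exponential and the domain indicator functions $\chi_{\Gamma_1}$ and $\chi_{\Gamma_c}$ attached to both arguments of $k_j$, to make sure they are compatible with the decay $e^{-|j|b R}$ on every piece. Once the bookkeeping is set up correctly and Lemma \ref{Lemma5Bis} is invoked, each piece reduces to the convergence of a single geometric series, and the $F$-independence of the resulting bound is automatic.
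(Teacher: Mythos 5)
Your proof is correct and takes essentially the same route as the paper: the paper's own proof is a one-line remark that one mimics the proof of Theorem \ref{Theorem3}, with the exponential estimate (\ref{Eqn28Bis}) of Lemma \ref{Lemma5Bis} absorbing the distortion factors $e^{-ijb\theta}$, $e^{-i(p-bj)\theta}$, $e^{ip\theta}$ precisely because $|\Im \theta| < R$ — which is exactly the mechanism you identify and verify case by case. Your triangle-inequality bound $\|K\| \le \sum_{j} \sup_p |k_j(p)|$ over weighted shifts is just a slightly cleaner packaging of the same summation-in-$j$ estimate the paper uses.
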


\begin {proof}
The proof basically mimic the proof of Theorem \ref {Theorem3} where, in the case of analytic potential $V(x)$, the exponential 
term $e^{- i j b \theta}$ in $K^\theta_{cc}$ (and similarly the other exponential terms in $K_{1c}^\theta$ and $K_{c1}^\theta$) 
are supported by the exponential estimate (\ref {Eqn28Bis}) since we choose $ |\Im \theta | < R$.
\end {proof}

From this fact and since 
\bee
\left \| \left [ \tilde H^\theta_{F,0} - z \right ]^{-1} \right \| \le \frac {C}{F} , \ \ \forall z \in \gamma_0 
\eee
then it follows that the perturbative series 
\bee
\left [ \tilde H^\theta_{F,\eta} - z \right ]^{-1} = \sum_{j=0}^\infty \left [ \tilde H^\theta_{F,0} - z \right ]^{-1} 
\left ( - \eta K^\theta  \left [ \tilde H^\theta_{F,0} - z \right ]^{-1} \right )^j 
\eee
converges for any $\eta$ such that $|\eta | < C F$ for some $C>0$. \ In fact, we don't have an estimate of such a constant $C$ and thus, in 
order to get results about the spectrum of $\tilde H^\theta_{F,F}$  we need to improve the result of Lemma \ref {Lemma9} and obtain the 
convergence of the perturbation series for $\eta =F$. \ To this end we set 
\bee
{\mathcal K}^\theta (z) = \left [ \tilde H^\theta_{F,0} - z \right ]^{-1}  \eta K^\theta  
\left [ \tilde H^\theta_{F,0} - z \right ]^{-1} \, . 
\eee
and the perturbation series takes the form
\be
\left [ \tilde H^\theta_{F,\eta} - z \right ]^{-1} = \left ( \left [ \tilde H^\theta_{F,0} - z \right ]^{-1} - 
{\mathcal K}^\theta (z) \right )  \sum_{j=0}^\infty \left ( \eta K^\theta {\mathcal K}^\theta (z) \right )^j \label {Eqn36}
\ee

We prove that

\begin {theorem} \label {Theorem3Bis} Let $\theta$ be in a given box $[-\Lambda , + \Lambda ] \times i[-R , 0]$ for some $\Lambda >0$ fixed. \ Then, for $F$ small enough  
\be
\| {\mathcal K}^\theta (z) \| \le C_2 \eta F^{-3/2} \, , \label {Eqn36Bis}
\ee
for any $z\in \gamma_0$ and for some $C_2$ independent of $F$ and $\theta$.
\end {theorem}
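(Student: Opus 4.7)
The plan is to beat the naive estimate $\|\mathcal{K}^\theta(z)\|\le \eta C_1 (C/F)^2 = O(\eta/F^2)$ that follows immediately from Lemma~\ref{Lemma8} and Lemma~\ref{Lemma9} by a factor of $F^{1/2}$. The missing $\sqrt{F}$ must come from exploiting the $1/F$-scale oscillations that $R^\theta = [\tilde H^\theta_{F,0}-z]^{-1}$ carries on the contour $\gamma_0$.

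First I would split $\mathcal{H}_E = L^2(\mathcal{B})\oplus L^2(\Gamma_c)$ and write $\mathcal{K}^\theta(z)$ as the $2\times 2$ operator matrix
$$\mathcal{K}^\theta(z) = \eta \begin{pmatrix} 0 & R_1\, K^\theta_{1c}\, R_c^\theta \\ R_c^\theta\, K^\theta_{c1}\, R_1 & R_c^\theta\, K^\theta_{cc}\, R_c^\theta \end{pmatrix},$$
where $R_1 = [\tilde H_1 - z]^{-1}$ and $R_c^\theta = [\tilde H_c^\theta - z]^{-1}$, following the block-diagonal decomposition of $\tilde H^\theta_{F,0}$ given in Lemma~\ref{Lemma5} (for $N=1$). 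It suffices to prove an $O(F^{-3/2})$ bound for each block.

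For the cross blocks I would expand $R_1$ via Lemma~\ref{Lemma6} as $R_1 = \sum_{j\in\Z} (\mathcal{E}_{1,j}-z)^{-1} |\hat\psi_{1,j}\rangle\langle \hat\psi_{1,j}|$, with eigenfunctions $\hat\psi_{1,j}$ carrying the $1/F$-phase of~(\ref{Eqn32Bis}). Since $\{\hat\psi_{1,j}\}$ is orthonormal and $|\mathcal E_{1,j}-z|\ge cF(1+|j|)$ for $z\in\gamma_0$, the Parseval identity reduces the bound $\|R_1 K^\theta_{1c} R_c^\theta\| \le C/F^{3/2}$ to a uniform estimate $|\langle \hat\psi_{1,j}, K^\theta_{1c} R_c^\theta \hat\psi_c\rangle| \le C F^{-1/2}\|\hat\psi_c\|$. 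Writing this matrix element as $\langle (K^\theta_{1c})^* \hat\psi_{1,j}, R_c^\theta \hat\psi_c\rangle_{L^2(\Gamma_c)}$ and substituting the explicit form of $R_c^\theta$ from Lemma~\ref{Lemma7}, this becomes a double oscillatory integral whose overall phase is of order $1/F$ and whose amplitude is smooth uniformly in $j$ (by the decay of $\alpha_j$ from Lemma~\ref{Lemma5Bis} together with Theorem~\ref{Lemma1}). A single integration by parts against the $1/F$-phase, combined with the exponential decay of $R_c^\theta$ from the proof of Lemma~\ref{Lemma8}, produces the $F^{1/2}$ improvement; the boundary contributions vanish by the matching conditions~(\ref{22Bis}) on $\hat \psi_{1,j}$.

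For the $(c,c)$ block I would write the kernel, using Lemma~\ref{Lemma7} and Lemma~\ref{Lemma5}, as
$$(R_c^\theta K^\theta_{cc} R_c^\theta)(p,q) = \sum_{j\neq 0} e^{-ijb\theta} \int R_c^\theta(p,s)\, \alpha_j(s,s-bj)\, R_c^\theta(s-bj,q)\, ds.$$
The $s$-phase is $(i/F)\bigl[\int_s^p Q_c + \int_{s-bj}^q Q_c\bigr]$, with $s$-derivative $Q_c(s-bj)-Q_c(s) = E(s-bj) - E(s) + O(F)$. For $j\neq 0$ this derivative is bounded away from zero outside a compact $s$-set and grows linearly at infinity, since $E(p)\sim p^2$. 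I would split the $s$-integration into a neighbourhood of possible stationary points (handled by a direct stationary-phase estimate yielding the $F^{1/2}$ gain) and its complement (where one integration by parts yields a factor $F/|E(s-bj) - E(s)|$), and combine with $|R_c^\theta(p,s)|\le C F^{-1} e^{-c(s-p)}$ from Lemma~\ref{Lemma8} and the decay $|\alpha_j|\le C e^{-|j|bR}$ from Lemma~\ref{Lemma5Bis}. This yields a pointwise bound $|(R_c^\theta K^\theta_{cc} R_c^\theta)(p,q)| \le C F^{-3/2} \sum_{j\neq 0} e^{-|j|bR} e^{-c|p-q-bj|/2}$, which is summable in $j$ and gives the desired $O(F^{-3/2})$ operator norm via Schur's test.

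The main obstacle will be the stationary-phase analysis for the $(c,c)$ block: the derivative $E(s-bj) - E(s)$ may vanish at isolated $j$-dependent points of $\tilde \R$, and one must perform a uniform stationary-phase estimate producing the $F^{1/2}$ factor while preserving the exponential summability in $j$ coming from Lemma~\ref{Lemma5Bis}, and uniform control in $\theta \in [-\Lambda, +\Lambda] \times i[-R, 0]$. The cross-block analysis is conceptually simpler because Lemma~\ref{Lemma6} provides an explicit oscillatory basis for the spectral expansion.
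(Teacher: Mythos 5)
Your proposal follows essentially the same route as the paper: the same $2\times 2$ block decomposition of ${\mathcal K}^\theta (z)$, the same identification of the $(c,c)$ block as the critical oscillatory integral whose phase derivative $E(s-bj)-E(s)$ vanishes only at the Kohn branch points, the same $F^{1/2}$ gain obtained by splitting the integration into a small neighbourhood of the stationary set and its complement (this is exactly the paper's Lemma \ref {Lemma11}, with the split at $|E(q-jb)-E(q)|=F^{1/2}$), and an equivalent kernel bookkeeping (your Schur test versus the paper's $L^1$/$L^\infty$ bounds plus Riesz--Thorin interpolation). The only divergence is organizational: the paper dismisses the cross blocks with a ``similarly,'' whereas you handle them through the eigenfunction expansion of $[\tilde H_1 - z]^{-1}$.
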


\begin {proof} The vector ${\mathcal K}^\theta \hat \psi$ may be written as $({\mathcal K}^\theta_{1c} \hat \psi_c , 
{\mathcal K}^\theta_{c1} \hat \psi_1 + {\mathcal K}^\theta_{cc} \hat \psi_c )$ where 
\bee
{\mathcal K}_{1c}^\theta &=& \eta [\tilde H_1 - z ]^{-1} K_{1,c}^\theta [\tilde H_c^\theta - z]^{-1} \\ 
{\mathcal K}_{c1}^\theta &=& \eta [\tilde H_c^\theta - z ]^{-1} K_{c,1}^\theta [\tilde H_1 - z]^{-1} \\ 
{\mathcal K}_{cc}^\theta &=& \eta [\tilde H_c^\theta - z ]^{-1} K_{c,c}^\theta [\tilde H_c^\theta - z]^{-1} 
\eee
In order to estimate these terms we consider at first ${\mathcal K}_{cc}^\theta \hat \psi_c $. \ Let assume for a moment that 
$p \in \Gamma_+$, then by (\ref {Eqn35}) it follows that:  
\bee
\left [ {\mathcal K}_{cc}^\theta  \hat \psi_c \right ] (p) = - \frac {i\eta}{F} \fint_p^{+\infty} e^{\frac {i}{F} \fint_q^p 
Q_c (s) ds } \left ( K_{cc}^\theta [\tilde H_c^\theta - z ]^{-1} \hat \psi_c \right ) (q) dq = g_+ (p) + g_- (p) 
\eee
where by (\ref {Eqn28Ter})
\bee
g_{\pm} (p)= -i \frac {\eta}{F} \fint_p^{+\infty} e^{\frac {i}{F} \fint_q^p Q_c (s) ds }  \fint_{\Gamma_\pm} Y(q,s) 
e^{i(q-s)\theta } \left ( [\tilde H_c^\theta - z ]^{-1} \hat \psi_c \right ) (s) ds \, d q 
\eee
We consider, at first, the integral over $\Gamma_+$ obtaining that (see Lemma \ref {Lemma5})
\bee
g_+ (p) &=& - \frac {\eta}{F^2} \fint_p^{+\infty} e^{\frac {i}{F} \fint_q^p Q_c (s) ds }  
\sum_{j\in \Z \setminus \{ 0\} \ : \ q-jb \in \Gamma_+}  \alpha_j (q,q-jb) e^{ij b \theta } \fint_{q-jb}^{+\infty} 
e^{\frac iF \fint_{\zeta}^{q-jb} Q_c (\xi ) d \xi } \hat \psi_c (\zeta ) d\zeta  \, d q 
\eee
As a first step we should remark that we can exchange integration and sum operations because 
\bee
&& \left | e^{\frac {i}{F} \fint_q^p Q_c (s) ds } \right | 
= \left | e^{-\frac {1}{F} \fint_q^p \Im Q_c (s) ds } \right | = e^{c (p-q)}  \\ 
&& \left | e^{\frac {i}{F} \fint_\zeta^{q-jb} Q_c (s) ds } \right | 
= \left | e^{-\frac {1}{F} \fint_\zeta^{q-jb} \Im Q_c (s) ds } \right | = e^{c (q-jb-\zeta)} 
\eee
for some positive constant $c>0$ defined in (\ref {Eqn31Bis}) and where $p<q$ and $q-jb<\zeta $ in the domain of integration, 
and (see (\ref {Eqn28Bis}) and (\ref {Eqn28Ter})) 
\bee
\left | \alpha_j (q,q-jb) \right | \le C e^{-|j| b (R-\Im \theta ) }\, , \ \Im \theta < R \, , 
\eee
uniformly with respect to $q$. 

Hence $g_+(p)$ may be written as 
\be
g_+ (p) &=& - \frac {\eta}{F^2} \sum_{j\in \Z \setminus \{ 0\} } e^{ijb\theta} \int_{p-jb}^{+\infty} 
 \hat \psi_c (\zeta ) d \zeta  \times \nonumber \\ 
&& \ \ \times \left [ \int_p^{\zeta + jb} \chi (q-jb-\frac 12 b) \alpha_j (q,q-jb) 
e^{\frac iF \fint_q^p Q_c (s) ds + \frac iF \fint_\zeta^{q-jb} Q_c (s) ds } dq \right ] \nonumber \\ 
&=& - \frac {\eta}{F^2} \sum_{j\in \Z \setminus \{ 0\} } e^{ijb\theta} \fint_{p-jb}^{+\infty} 
  e^{\frac iF \fint_\zeta^p Q_c (s) ds } V_j (p,\zeta ) \hat \psi_c (\zeta ) d \zeta \label {Eqn37}
\ee
where $\chi (x) $ is the Heaviside function (i.e. $\chi (x) =1 $ for any $x\ge 0$ and $\chi (x)=0$ elsewhere), and 
\bee
V_j (p,\zeta ) = \fint_p^{\zeta + jb} \chi (q-jb-\frac 12 b) \alpha_j (q,q-jb) 
e^{-\frac iF \fint_q^{q-jb} Q_c (s) ds } dq \, .
\eee
By means of a stationary phase argument we obtain the following estimate.

\begin {lemma} \label {Lemma11}
We have that  
\bee
|V_j (p, \zeta ) | \le C F^{1/2} |\zeta + j b - p |e^{-c j b - |j| b R}
\eee
for some positive constants $c, C>0$ independent of $j$ and $F$ ($c$ is defined in (\ref {Eqn31Bis})).
\end {lemma}

\begin {proof}
The function $V_j (p,\zeta )$ can be written in the following form
\be
V_j (p, \zeta ) = \fint_{q_1}^{q_2} v_j (q) dq \, , \ v_j (q):= \alpha_j (q, q-jb) e^{-\frac {i}{F} \fint_{q}^{q-jb} E(s) ds } e^{-i \fint_q^{q-jb} 
\left [ C_0 (s) + \theta - z/F \right ] ds }  \label {Eqn34Bis}
\ee
where we set $q_2=\zeta+jb$ and $q_1 = \max \left [ p, \left ( j + \frac 12 \right ) b \right ]$, and where 
$z \sim F$ since $z \in \gamma_0$. \ Hence, the stationary points are the solutions $q \in \tilde \C$ of the equation 
\bee
E(q-jb)=E(q) \, . 
\eee
Recalling that $E(q)$ is a monotone increasing function on $\Gamma$ and that
\bee
E(-p)=E(p) \ \mbox { and } \ E (\bar p ) = \overline {E(p)} 
\eee
then the solution on $\tilde \C$ of the previous equation are only the Kohn branch points $k_j$. \ Hence, we could apply the stationary 
phase method obtaining the estimate
\be
|V_j (p, \zeta )| \le C_j F |\zeta + j b - p | e^{-c j b - |j| b R} \label {Eqn34Ter}
\ee
where $C_j \le C/g_j$ and $g_j$ is the gap between the two bands; indeed, by integration by parts we have that (\ref {Eqn34Bis}) has a 
denominator of the form $E(q-jb)-E(q)$ where
\bee
\min_{q\in \tilde \R} |E(q-jb)-E(q) | = \left | E \left ( \frac 12 j b \pm 0 \right ) - E \left ( - \frac 12 j b \mp 0 \right ) \right | = E_{j+1}^t - E_j^b := g_j 
\eee
In order to have an estimate uniform with respect to $j$ let us fix $\mu \in (0,1)$ and let
\bee
S_1 &=& \{ q \ : \ q\in [q_1,q_2] \cap \tilde \R , \ |E(q-jb)-E(q)| \ge F^\mu \} \\
S_2 &=& \{ q \ : \ q\in [q_1,q_2] \cap \tilde \R , \ |E(q-jb)-E(q)| < F^\mu \}
\eee
and let, with obvious meaning of notation, $V_j (p,\zeta ) = \fint_{S_1} v_j (q) dq + \fint_{S_2} v_j (q) dq$. \ The integral 
over the set $S_1$ may be estimated by integration by parts obtaining that
\bee
\left | \fint_{S_1} v_j (q) dq \right | 
&\le & F (q_2-q_1) e^{-c j b} \max_q \left | \frac {d}{dq} \frac {\alpha_j (q,q-jb) }{E(q-jb)-E(q)} \right | F^{-\mu} \\ 
&\le & C |\zeta + j b - p | F^{1-\mu} e^{-c j b - |j| b R}
\eee
since Lemma \ref {Lemma5Bis}. \ In order to estimate the second integral we simply observe that $E(p)$ has a square branch point 
at the Kohn's branch points and that $E(p) \sim p^2$ for large $p$, thus we have that the measure of the set $S_2$ is of order 
$F^\mu /j$. \ From this fact and from Lemma \ref {Lemma5Bis} again then it follows that
\bee
\left | \fint_{S_2} v_j (q) dq \right | \le C F^{\mu} e^{-c j b - |j| b R}\, . 
\eee
The result follows by choosing $\mu = \frac 12$.
\end {proof}

As a consequence of such a result it follows that 
\bee
|g_+ (p)| &\le & \frac {\eta}{F^2} \sum_j e^{-jb \Im \theta} \fint_{p-jb}^{+\infty} e^{c(p-\zeta )} |V_j (p,\zeta )| \, |\hat \psi_c 
(\zeta ) | d\zeta \\ 
&\le & C \frac {\eta F^{1/2} }{F^2} \sum_j e^{-jb \Im \theta} \fint_{p-jb}^{+\infty} e^{c(p-\zeta )}  |\zeta + j b - p |e^{-c j b - |j| b R} \, |\hat \psi_c 
(\zeta ) | d\zeta \\ 
&\le & C \eta F^{-3/2} \| \hat \psi_c \|_{L^\infty }
\eee
and similarly
\bee
\| g_+ (p) \|_{L^1 (\Gamma_+)} & \le & \frac {\eta}{F^2} \sum_{j\not= 0} e^{-j b \Im \theta} \fint_{\frac 12 b}^{+\infty} dp 
\left | \fint_{p-jb}^{+\infty} e^{c(p-\zeta )} |V_j (p,\zeta )| \, |\hat \psi_c (\zeta ) | d\zeta \right | \\ 
& \le & \frac {\eta}{F^2} \sum_{j\not= 0} e^{-j b \Im \theta} \fint_{\frac 12 b-jb}^{+\infty} |\hat \psi_c (\zeta ) | d\zeta 
\fint_{\frac 12 b}^{\zeta-jb} e^{c(p-\zeta )} |V_j (p,\zeta )| dp \\ 
& \le & \frac {\eta}{F^2} \sum_{j\not= 0} e^{-j b \Im \theta} \fint_{\frac 12 b-jb}^{+\infty} |\hat \psi_c (\zeta ) | d\zeta 
\int_{\frac 12 b}^{\zeta-jb} e^{c(p-\zeta )} C F^{1/2} |\zeta + j b - p |e^{-c j b - |j| b R} dp \\ 
& \le & C \eta F^{-3/2} \| \hat \psi_c \|_{L^1 (\Gamma_c) }
\eee
Since a similar estimate still hols true for the vector $g_- (p)$ and when $p\in \Gamma_-$; then, by means of the Riesz-Thorin interpolation 
argument already applied in the proof of Lemma \ref {Lemma8}, the estimate (\ref {Eqn36Bis}) follows. \ Similarly we have the estimates of 
the other two terms ${\mathcal K}^\theta_{1c} \hat \psi_c$ and ${\mathcal K}^\theta_{c1} \hat \psi_1 $.

The proof of Theorem \ref {Theorem3Bis} is so completed.
\end {proof}

\subsection {Proof of Theorem \ref {Theorem1}}

As a result of Theorem \ref {Theorem3Bis} it follows that the series (\ref {Eqn36}) converges for any $\eta$ such that 
$C_1 C_2 \eta^2 < F^{3/2}$. \ In particular, if $F$ is small enough then the series (\ref {Eqn36}) converges for $\eta =F$. \ We 
have proved that 

\begin {theorem} \label {Theorem4}
Let $\theta$ be fixed and such that $-R < \Im \theta <0$,  let $F$ small enough, then the spectrum of $\tilde H^\theta_F := \tilde H^\theta_{F,F}$ inside 
$\gamma_0$ consists of one, and only, non degenerate eigenvalue ${\mathcal E}_{1,0} (F)$
\end {theorem}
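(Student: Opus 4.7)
The plan is to prove Theorem \ref{Theorem4} by the standard Riesz-projection argument of regular perturbation theory, using the uniform bounds of Lemma \ref{Lemma8} and Theorem \ref{Theorem3Bis} to control the size of the perturbation around the unperturbed eigenvalue ${\mathcal E}_{1,0}$. Concretely, I would define
$$
P(F) = - \frac{1}{2\pi i} \oint_{\gamma_0} \bigl[\tilde H^\theta_{F,F} - z\bigr]^{-1}\, dz, \qquad P(0) = - \frac{1}{2\pi i} \oint_{\gamma_0} \bigl[\tilde H^\theta_{F,0} - z\bigr]^{-1}\, dz,
$$
and argue that $P(F)$ is a rank-one projector close to $P(0)$. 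By Lemma \ref{Lemma6}, $P(0)$ has rank one, since it projects onto the eigenline spanned by the normalized eigenvector $\hat\psi_{1,0}$ given by (\ref{Eqn32Bis}) associated with the simple eigenvalue ${\mathcal E}_{1,0}$, which is the only point of $\sigma(\tilde H_{F,0}^\theta)$ inside $\gamma_0$ (the essential spectrum lies on the line $\R + iF\Im\theta$, which meets $\gamma_0$ strictly outside $\Omega_0$).

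The first step is to check that $\gamma_0 \subset \rho(\tilde H^\theta_F)$. Combining Lemma \ref{Lemma9} ($\|K^\theta\| \le C_1$) with Theorem \ref{Theorem3Bis} one obtains $\|\eta K^\theta {\mathcal K}^\theta(z)\| \le C_1 C_2 \eta^2 F^{-3/2}$ uniformly for $z\in \gamma_0$. Setting $\eta = F$ this is bounded by $C_1 C_2 F^{1/2} < \tfrac 12$ for $F$ small enough, so the Neumann series (\ref{Eqn36}) converges uniformly on $\gamma_0$ and $P(F)$ is a well-defined bounded projector.

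The second step is to estimate $\|P(F)-P(0)\|$. Subtracting the $j=0$ term one may write
$$
\bigl[\tilde H^\theta_{F,F}-z\bigr]^{-1} - \bigl[\tilde H^\theta_{F,0}-z\bigr]^{-1} = -{\mathcal K}^\theta(z) \sum_{j=0}^{\infty}\bigl(FK^\theta {\mathcal K}^\theta(z)\bigr)^j + \bigl[\tilde H^\theta_{F,0}-z\bigr]^{-1} \sum_{j=1}^{\infty}\bigl(FK^\theta {\mathcal K}^\theta(z)\bigr)^j.
$$
Using Lemma \ref{Lemma8}, Theorem \ref{Theorem3Bis}, and the convergence of the Neumann factor, both terms are $O(F^{-1/2})$ in operator norm, uniformly on $\gamma_0$. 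Since $\gamma_0$ has length $O(F)$, integrating yields $\|P(F)-P(0)\| = O(F^{1/2}) \to 0$ as $F \to 0$. For $F$ sufficiently small one has $\|P(F)-P(0)\| < 1$, and a classical result (Kato, \emph{Perturbation theory}, Ch.\ I, \S 4.6) then gives $\mathrm{rank}\, P(F) = \mathrm{rank}\, P(0) = 1$. Hence the total algebraic multiplicity of the spectrum of $\tilde H^\theta_F$ inside $\gamma_0$ equals one, i.e.\ $\tilde H^\theta_F$ has exactly one, non-degenerate, eigenvalue ${\mathcal E}_{1,0}(F)$ inside $\gamma_0$.

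The genuine obstacle of the argument has already been resolved by Theorem \ref{Theorem3Bis}: the naive product bound $\|[\tilde H^\theta_{F,0}-z]^{-1}\, F K^\theta\, [\tilde H^\theta_{F,0}-z]^{-1}\| \le C F \cdot (C/F)^2 = O(F^{-1})$ is \emph{not} strong enough to make $FK^\theta {\mathcal K}^\theta(z)$ contracting. The stationary-phase analysis at the Kohn branch points in the proof of Theorem \ref{Theorem3Bis} extracts the crucial additional factor $F^{1/2}$, reducing the bound on ${\mathcal K}^\theta(z)$ to $O(F^{-1/2})$ and hence on $FK^\theta{\mathcal K}^\theta(z)$ to $O(F^{1/2})$. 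Once that gain is in hand the present theorem is a mechanical application of the Riesz-projection formalism sketched above.
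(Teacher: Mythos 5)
Your argument is correct and follows essentially the same route as the paper: both rest on the convergence of the perturbation series (\ref{Eqn36}) on $\gamma_0$ (guaranteed by Lemma \ref{Lemma9} and Theorem \ref{Theorem3Bis} with $\eta=F$) and then compare Riesz projections to conclude that the rank inside $\gamma_0$ is one. The only difference is in the last step, where you bound $\|P(F)-P(0)\|=O(F^{1/2})<1$ directly and invoke Kato's equal-rank criterion, while the paper deduces constancy of $\dim\mathrm{Ran}\,P^\theta(\eta)$ along the analytic family $\eta\in[0,F]$; these are interchangeable, and your version has the minor advantage of giving an explicit rate.
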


\begin {proof} This result holds true because, from the convergence of the perturbation series (\ref {Eqn36}), it turns out that 
$\gamma_0 \subset \rho \left ( \tilde H^\theta_{F,\eta} \right ) $ for any $\eta \in [0,F]$. \ On the other side, $\tilde 
H^\theta_{F,\eta}$ is an analytic family of operator of type (A). \ Hence the dimension of the range of the projection
\bee
P^\theta (\eta ) = - \frac {1}{2\pi i} \oint_{\gamma_0} [\tilde H^\theta_{F,\eta} - z ]^{-1} dz 
\eee
is constant for any $\eta \in [0,F]$, where dim(Ran($\tilde H^\theta_{F,0}$))=1. \ Hence, dim(Ran($\tilde H^\theta_{F,F}$))=1
\end {proof}

\begin {remark}
Similarly, the perturbation argument gives that $\sigma_{ess} (\tilde H^\theta_{F,F} )$ is contained in the strip $-C F \le \Im z 
\le - \frac 12 |\Im \theta | F$ for some $C>0$. \ In fact, we would expect that the essential spectrum coincides with the line 
$\Im z = - |\Im \theta | F$ as for the unperturbed problem.
\end {remark}

\begin {remark} \label {Remark4}
By construction it turns out that the spectrum of $\tilde H^\theta_{F,F}$ in the strip $|\Im z| < \frac 12 |\Im \theta | F$ is given by a 
ladder of simple eigenvalues
\bee
{\mathcal E}_{1,j} (F) = {\mathcal E}_{1,0} (F) + F j a . 
\eee 
Furthermore, a direct computation gives that the first perturbative term is of order $\asy (F^2)$. \ Indeed, let $\hat \psi_0 = 
(\hat \psi_{1,0} , \hat \psi_{c,0} )$ be the unperturbed eigenvector associated to the eigenvalue ${\mathcal E}_{1,0}$ given in 
(\ref {Eqn32Ter}), where $\psi_{c,0}\equiv 0$ and $\psi_{1,0}$ is given in (\ref {Eqn32Bis}). \ Then, the simple eigenvalue 
${\mathcal E}_{1,0} (F)$ is given by 
\bee
{\mathcal E}_{1,0} (F) = \frac {\langle \phi , \tilde H^\theta_{F,F} P^\theta (F) \hat \psi_{1,0} \rangle}{\langle \phi ,  
P^\theta (F) \hat \psi_{1,0} \rangle}
\eee
where $\phi$ is any vector such that ${\langle \phi ,  P^\theta (F) \hat \psi_{1,0} \rangle} \not= 0$, e.g. $\phi = \hat \psi_{1,0}$. \ In 
particular, an straightforward calculation gives that the first perturbative term is exactly zero: ${\mathcal E}_{1,0} (F) = 
{\mathcal E}_{1,0} + O (F^2)$.
\end {remark}

\begin {remark}
If $N$ is bigger than $1$ then the same arguments apply again obtaining that 
\bee
\mbox {\rm dim}(\mbox {\rm Ran}(\tilde H^\theta_{F,F}))=N
\eee
for $F$ small enough, i.e. $0<F<F_N$ for some $F_N>0$. \ In such a case the spectrum of $\tilde H^\theta_{F,F}$ in the strip $|\Im z| < 
\frac 12 |\Im \theta | F$ is given by $N$ ladders of simple eigenvalues. \ In fact, $F_N$ will depend on the norm of the perturbation and on the distance of the unperturbed eigenvalues from the boundary $\gamma_0$, then we expect that $F_N \to 0$ as $N\to + \infty$.
\end {remark}

We are ready now to conclude the proof of Theorem \ref {Theorem1}. \ Let
\bee
{\mathcal R} := \{ {\mathcal E}_{1,0} (F) + F a j \, , \ j \in \Z \}  = \sigma \left ( \tilde H^\theta_{F,F} \right ) 
\cap \{ z\in \C \ : \ |\Im z | \le \frac 12 |\Im \theta | F \} 
\eee
Furthermore, we remark that $K^\theta$ is a bounded operator-valued analytic function on $|\Im \theta | < R$. \ Hence 
$\tilde H_{F,F}^\theta $ is an analytic family of type (A). \ By means of standard arguments (see \cite {CFKS}) then it follows 
that the eigenvalues of $\tilde H^\theta_{F,F}$ are $\theta$-independent (as long as the essential spectrum does not intersect these 
eigenvalues). \ Hence ${\mathcal R}$ does not depends on $\theta$ for $F$ small enough and 
\bee
{\mathcal R} \subset \{ z\in \C \ : \ -\frac 12 |\Im \theta | F < \Im z < 0 \} \, . 
\eee
Indeed, it follows that $\Im {\mathcal E}_{1,0} (F)  <0$; \ If not then ${\mathcal E}_{1,0} (F)$ is an eigenvalue of $\tilde H^\theta_{F,F}$ 
for any $\theta$, even for $\theta =0$, and then it is an eigenvalue of $H_F$, in contradiction with the fact that the spectrum of $H_F$ 
is purely absolutely continuous. \ The set ${\mathcal R}$ is then the set of quantum resonances; indeed, by making use of the standard 
arguments discussed in \cite {CFKS}, it follows that the function
\be 
\langle \hat \psi , [\tilde H_F - z ]^{-1} \hat \psi \rangle \label {Eqn38}
\ee
has an analytic continuation from $\Im z >0$ to the set $\left \{ \Im z >- \frac 12 F |\Im \theta | \right \} \setminus 
{\mathcal R}$. \ If $z_0 \in {\mathcal R}$ then there exists an $\hat \psi$ such that (\ref {Eqn38}) has an analytic continuation 
to $\Im z >- \frac 12 F |\Im \theta |$ with poles in the points of ${\mathcal R}$.

\appendix

\section {Derivation of (\ref {Eqn20})}\label {AppA} In order to derive formula (\ref {Eqn20}) we make use of the same ideas as 
in \cite {K}. \ If the periodic potential is not a symmetric function the we may write (see \S 4 in \cite {K})
\bee
\varphi (x,E)= \frac {\chi (x,E)}{N^{1/2}(E)} 
\eee
where $N$ is a normalization factor and where
\bee
\chi (x,E) = \phi_1 (x,E) \phi_2 (a,E) - \left [ \phi_1 (a,E) -\lambda (E) \right ] \phi_2 (x,E) \, . 
\eee
Following the same argument by \cite {K} we only have to estimate the integral 
\bee
I := \int_0^a \chi (x,E) \tilde \chi (x,E+\delta E) dx 
\eee
where 
\bee
\tilde \chi (x,E) = \bar \chi (x,E) = \phi_1 (x,E) \phi_2 (a,E) - \left [ \phi_1 (a,E) -\frac {1}{\lambda (E)} \right ] \phi_2 (x,E) \, .
\eee
By making use of the quasiperiodic boundary conditions 
\bee
\left \{
\begin {array}{lcl}
\chi (a,E) &=& \lambda (E) \chi (0,E) \\ 
\frac {\partial \chi (a,E)}{\partial x} &=& \lambda (E) \frac {\partial \chi (0,E)}{\partial x} 
\end {array}
\right.
\ \mbox { and } \ 
\left \{
\begin {array}{lcl}
\tilde \chi (a,E) &=& \frac {1}{\lambda (E)} \tilde \chi (0,E) \\ 
\frac {\partial \tilde \chi (a,E)}{\partial x} &=& \frac {1}{\lambda (E)} \frac {\partial \tilde \chi (0,E)}{\partial x} 
\end {array}
\right.
\eee
a straightforward calculation yields to
\bee
I = - \lambda' (E) \left ( 1- \frac {1}{\lambda^2 (E)} \right ) \phi_2 (a,E) \delta E + O\left ((\delta E)^2\right ) 
\eee
as in the case of symmetric potentials. \ Hence $N(E) =-2 \phi_2 (a,E)\frac {d\mu }{dE}$ holds true even in the case of not symmetric 
potential too, proving (\ref {Eqn20}). 

\begin{thebibliography}{99}

\bibitem {Av} Avron J., {\it  On the spectrum of $p^2+ V (x)+ \epsilon x$, with $V$ periodic and $\epsilon$ complex}, J. Phys. A: Math. 
and Gen. {\bf 12}, 2393 (1979).

\bibitem {BCDSSW} Bentosela F., Carmona R., Duclos P., Simon B., Souillard B., and  Weder R., {\it Schr\"odinger operators with an 
electric field and random or deterministic potentials}, Commun. Math. Phys. {\bf 88}, 387 (1983).

\bibitem {BG} Bentosela F., and Grecchi V., {\it Stark Wannier ladders}, Commun. Math. Phys. {\bf 142}, 169 (1992). 

\bibitem {B} Blount I.E., {\it Formalisms of Band Theory}, Solid State Physics vol. 13 (New York: Academic) (1962).

\bibitem {BD} Buslaev V.S., and Dmitrieva L.A., {\it A Bloch electron in an external field}, Leningrad Math. J. {\bf 1}, 187 (1990).

\bibitem {CH} Combes J.M., and Hislop P.D., {\it Stark ladder resonances for small electric fields}, Commun. Math. Phys. {\bf 140}, 291 (1991).

\bibitem {CFKS} Cycon H.L., Froese R.G., Kirsch W., and Simon B., {\it Schr\"odinger operators with application to quantum mechanics 
and global geometry}, (Springer Verlag: 1987).

\bibitem {FGZ} Ferrari M., Grecchi V., and F Zironi F: {\it Stark-Wannier states and Bender-Wu singularities}, J. of Phys. C: Solid 
State Physics {\bf 18}, 5825 (1985). 

\bibitem {F} Firsova N.E., {\it The Riemann surface of a quasi-momentum and scattering theory for a perturbed Hill operator}.
 J. Sov. Math. {\bf 51}, 487 (1979).

\bibitem {GKK} Gluck M., Kolovsky A.R., and Korsch H.J., {\it Wannier-Stark resonances in optical and semiconductor
superlattices}, Phys. Rep. {\bf 366} 103, (2002).

\bibitem {GMS} Grecchi V., Maioli M., and Sacchetti A., {\it Stark ladders of resonances: Wannier ladders and perturbation theory}, 
Commun. Math. Phys. {\bf 159}, 605 (1994).

\bibitem {GS} Grecchi V., and Sacchetti A., {\it Lifetime of the Wannier-Stark Resonances and Perturbation Theory}, 
Commun. Math. Phys. {\bf 185}, 359 (1997).

\bibitem {HH} Herbst I.W., and Howland J.S., {\it The Stark ladder and other one-dimensional external field problems}, Commun. 
Math. Phys. {\bf 80}, 23 (1981).

\bibitem {HST} H\"overmann F., Spohn H., and Teufel S., {\it The semiclassical limit for the Schr\"odinger equation with a short 
scale periodic potential}, Commun. Math. Phys. {\bf 215}, 609 (2001).

\bibitem {Kato} Kato T., {\it Perturbation theory for linear operators}, (Springer: 1976).

\bibitem {K} Kohn W., {\it Analytic Properties of Bloch Waves and Wannier Functions}, Phys. Rev. {\bf 115}, 809 (1959).

\bibitem {MW} Magnus W., and Winkler S., {\it Hill's equation}, (Interscience Publishers: 1966).

\bibitem {MS} Maioli M., and Sacchetti A., {\it Absence of absolutely continuous spectrum for Stark-Bloch operators with strongly singular 
periodic potentials}, J. Phys. A: Math. and Gen. {\bf 28}, 1101 (1995) and {\bf 31}, 1115 (1998).

\bibitem {N1} Nenciu G., {\it Adiabatic theorem and spectral concentration. I. Arbitrary order spectral concentration for the Stark 
effect in atomic physics}, Commun. Math. Phys. {\bf 82}, 121 (1981).

\bibitem {N2} Nenciu G., {\it Dynamics of band electrons in electric and magnetic fields: rigorous justification of the effective 
Hamiltonians}, Rev. Mod. Phys. {\bf 63}, 91 (1991).

\bibitem {OK} Odeh F., and Keller J.B., {\it Partial Differential Equations with Periodic Coefficients and Bloch Waves in Crystals}, 
J. Math. Phys. {\bf 5}, 1499 (1964).

\bibitem {RS} Reed M., and Simon B., {\it Methods of modern mathematical physics, vol. IV: Analysis of operators}, (Academic press: 1978).
 
\bibitem {T} Titchmarsch E.C., {\it Eigenfunction expansion associated with second-order differential equations}, (Oxford: 1946).

\bibitem {W1} Wannier G.H., {\it Wave Functions and Effective Hamiltonian for Bloch Electrons in an Electric Field}, Phys. Rev. 
{\bf 117}, 432 (1960).

\bibitem {W2} Wannier G.H., {\it Dynamics of Band Electrons in Electric and Magnetic Fields}, Rev. Mod. Phys. {\bf 34}, 645 (1962).

\bibitem {W3} Wannier G.H., {\it Stark Ladder in Solids? A Reply}, Phys. Rev. {\bf 181}, 1364 (1969).

\bibitem {Z1} Zak J., {\it Stark Ladder in Solids?}, Phys. Rev. Lett. {\bf 20}, 1477 (1968).

\bibitem {Z2} Zak J., {\it Stark Ladder in Solids? A Reply to a Reply}, Phys. Rev. {\bf 181}, 1366 (1969). 

\end {thebibliography}

\end {document}